\def\maxwidth{ %
  \ifdim\Gin@nat@width>\linewidth
    \linewidth
  \else
    \Gin@nat@width
  \fi
}
\definecolor{fgcolor}{rgb}{0.345, 0.345, 0.345}
\definecolor{shadecolor}{rgb}{.97, .97, .97}
\definecolor{messagecolor}{rgb}{0, 0, 0}
\definecolor{warningcolor}{rgb}{1, 0, 1}
\definecolor{errorcolor}{rgb}{1, 0, 0}
\newenvironment{knitrout}{}{} 
\DeclareMathOperator*{\argmax}{arg\,max}
\DeclareMathOperator*{\argmin}{arg\,min}
\newcommand{\blind}{0}
\newcommand\R{\mathbb{R}}
\newcommand\unit{u} 
\newcommand\Unit{U} 
\newcommand\nmif{m} 
\newcommand\shared{\phi}
\newcommand\specific{\psi}
\newcommand\nshared{d_\shared}
\newcommand\nspecific{d_\specific}
\newcommand\BorelT{\mathcal{B}^{\nshared+\Unit\nspecific}}
\newcommand\regFun{\gamma}
\newcommand\iid{\text{iid}}
\newcommand\seq[2]{{#1}\!:\!{#2}}
\newcommand\prob{\mathbb{P}}
\newcommand\eulerstep{\Delta}
\newcommand\nclone{\tilde{n}}
\newcommand\nstack{n'}
\newcommand\addRV{Z}
\newcommand\lemmaBound{c}
\newcommand\normal{{\mathcal N}}
\newcommand\pop{\mathrm{pop}}
\newtheorem{lemma}{Lemma}
\newtheorem{theorem}{Theorem}
\newtheorem{corollary}{Corollary}
\newcommand{\dbtilde}[1]{\accentset{\approx}{#1}}
\begin{document}

\def\spacingset#1{\renewcommand{\baselinestretch}%
{#1}\small\normalsize} \spacingset{1}


\if0\blind
{
\title{\bf Iterating marginalized Bayes maps for likelihood maximization with application to nonlinear panel models}

\author{
  Jesse Wheeler\thanks{Email: jessewheeler@isu.edu, ORCiD: 0000-0003-3941-3884} \\
  Department of Mathematics and Statistics, Idaho State University
  \and
  Aaron J. Abkemeier \\
  Department of Statistics, University of Michigan
  \and
  Edward L. Ionides \\
  Department of Statistics, University of Michigan
}
  \maketitle
} \fi

\if1\blind
{
  \bigskip
  \bigskip
  \bigskip
  \begin{center}
    {\Large\bf  }
\end{center}
  \medskip
} \fi

\bigskip
\begin{abstract}
\noindent
Complex dynamic systems can be investigated by fitting mechanistic stochastic dynamic models to time series data.
In this context, commonly used Monte Carlo inference procedures for model selection and parameter estimation quickly become computationally unfeasible as the system dimension grows.
The increasing prevalence of panel data, characterized by multiple related time series, therefore necessitates the development of inference algorithms that are effective for this class of high-dimensional mechanistic models.
Nonlinear, non-Gaussian mechanistic models are routinely fitted to time series data but seldom to panel data, despite its widespread availability, suggesting that the practical difficulties for existing procedures are prohibitive. 
We investigate the use of iterated filtering algorithms for this purpose.
We introduce a novel algorithm that contains a marginalization step that mitigates issues arising from particle filtering in high dimensions. 
Our approach enables likelihood-based inference for models that were previously considered intractable, thus broadening the scope of dynamic models available for panel data analysis.
\end{abstract}

\noindent%
{\it Keywords:} Mechanistic Models, Nonlinear Dynamics, Particle Filters, High Dimensional Inference, Longitudinal data
\vfill

\newpage
\spacingset{1.9}

\AtBeginEnvironment{algorithm}{%
  \singlespacing
}

\section{Introduction}
\label{sec:intro}

\noindent Panel data, otherwise known as longitudinal data, are are a collection of related time series.
Each time series is a measurement on a person, animal or object known as the {\it unit}.
A unit may have its treatment assigned via a randomized experiment, or it may be measured in an observational study.
The measurement on each unit at each observation time may be vector-valued.
While each time series within a panel dataset could be analyzed individually, there is an advantage to analyzing the entire collection of time series simultaneously.
For example, data from multiple measurement units during an infectious disease outbreak may reveal transmission dynamics not evident from individual units \citep{wiens14,ranjeva19}.
Other examples include ecological experiments and observational studies in which data are collected over time across multiple measurement sites that may experience variation in covariates of interest \citep{searle16,hewitt24}.
Modeling the data collectively allows for researchers to learn about processes that are shared across units, as well as identifying traits that are unique to each unit.

A common approach to modeling nonlinear dynamic systems is through the use of mechanistic models \citep{auger21}.
These models involve the proposal of a system of equations that describe how unobserved dynamic states evolve over time.
When combined with a model relating the latent states to observable quantities, we obtain a partially observed Markov process (POMP) model, also known as a state space model (SSM) or a hidden Markov model.
Once calibrated to data, these models provide a quantitative description of the observations while simultaneously adhering to a given scientific hypothesis about how the data are generated.
Despite the growing ability of deep learning and other advanced machine learning techniques to extract useful information from complex data sets, mechanistic models continue to be an important tool for modern science \citep{baker18,hogg24}.
In addition to improved interpretability, a key advantage of these models is that they enable the estimation of counter-factual scenarios, such as impact of interventions on a dynamic system \citep{wheeler24}.  

Various algorithms, both frequentist and Bayesian, are capable in principle of fitting mechanistic models to panel data.
Some algorithms scale well as the model's dimension grows, but they rely on unrealistic approximations \citep{evensen09,wigren22}, or they avoid the difficulties related to evaluating the model's likelihood function by optimizing alternative measures of goodness-of-fit or approximations to the likelihood \citep{toni09,wood10,whitehouse23,haggstrom25}.
Other algorithms can be applied in less restrictive cases, but they scale poorly as the model's dimension increases \citep{andrieu10,ionides15}.

Contemporary applications of nonlinear mechanistic models for low-dimensional systems are abundant \citep[e.g.,][]{kramer24,he24,newman23}, yet examples of higher-dimensional panel equivalents remain sparse.
The lack of examples indicates that scientists have struggled, largely unsuccessfully, to use existing statistical methodology for nonlinear panel models.
Our work addresses this possibility by proposing a new algorithm that is effective on benchmark problems, theoretically supported, and easy to implement using existing software such as the R package \texttt{panelPomp} \citep{panelpomp}, or a newly developed python package called \texttt{pypomp}, available on GitHub and \texttt{PyPI} \citep{pypomp25}.

An extensively used approach in ecology, known as \emph{data cloning} \citep{lele07}, involves iteratively using Bayes' rule, by recursively mapping a prior distribution to a posterior distribution using the same likelihood function, until the iterated posterior distribution converges to a point mass at the maximum likelihood estimate (MLE).
\citet{ionides15} developed an extension of data cloning for SSMs by treating the model parameters as latent states, and performing a random walk for these parameters at each observation time.
By iteratively applying a particle filter \citep{arulampalam02} to this extended model and decreasing the random walk standard deviations over time, it can be shown that the parameters will converge to the MLE after a sufficiently large number of iterations \citep{ionides15, chen25}.
This algorithm, known as IF2, has seen extensive use for inference on low dimensional dynamic models, especially in epidemiological contexts \citep[e.g.,][]{pons-salort18,stocks18,subramanian21,fox22}.

In panel models, the iterated filtering algorithm must be extended to handle the Monte Carlo error that grows exponentially with the number of units.
To address this, \citet{breto20} proposed an algorithm called the panel iterated filter (PIF).
As discussed in Appendix~\ref{sec:panelTheory}, this algorithm is a special case of iterated filtering where the model structure and random walk sequence have been modified to reduce the loss of information that is described by \citet{Liu01}.
The PIF algorithm has been effective in obtaining maximum likelihood estimates for highly nonlinear, non-Gaussian panel models in previous works \citep{ranjeva17,ranjeva19,wale19,lee20,domeyer22}.
However, as we demonstrate here, the PIF algorithm can be inefficient when the number of units is large, due to the resampling of all parameters at each step of the iterated particle filters.
In order to improve the computational inefficiencies of the PIF algorithm for high-dimensional panel models, we present a novel inference technique which we call the marginalized panel iterated filter (MPIF).
The algorithm reduces the number of times particles are resampled, which results in increased diversity of the particles that represent the parameter distribution at each iteration and time step.

The article proceeds by introducing PanelPOMP models, the mathematical framework we use to discuss mechanistic models for panel data; 
elsewhere, these models have also been called multi-SSMs \citep{wigren22}.
We then establish theoretical guarantees for the convergence of iterated filtering algorithms for these models (Theorem~\ref{theorem:pif}).
Our theoretical framework extends the analysis by \citet{chen25} to panel data while requiring weaker conditions than those by \citet{breto20}. 
We then introduce the marginalization step of MPIF that facilitates high dimensional inference by describing a concept we call marginalized Bayes maps.
The marginalization step introduces a nonlinearity in the Bayes map that existing theoretical approaches cannot immediately address.
We provide theoretical guarantees of the MPIF algorithm for some special cases, and demonstrate that the algorithm remains effective in a more general setting via a simulation study.
We then conduct a data analysis of a high-dimensional dataset of pre-vaccination measles case reports from 20 towns in the United Kingdom (UK).

\section{Iterated filtering for panel models}\label{sec:ppomp}

A POMP model comprises an unobservable Markov process $\{X(t), \, t \in \mathcal{T}\subset \mathbb{R}\}$ and an observable sequence $Y_1,\dots,Y_N$.
We suppose that $Y_n$ is a measurement of $X(t_n)$, with $t_1<\dots<t_N$, formalized by a requirement that $Y_n$ is conditionally independent of $\{X(s),Y_k:k\neq n, s\neq t_n \}$ given $X(t_n)$.
We assume that $X(t)$ takes values in $\mathcal{X} \subset \mathbb{R}^{d_x}$, and $Y_n$ takes values in $\mathcal{Y} \subset \mathbb{R}^{d_y}$.
Data collected at time $t_n$, denoted by $y^*_n$, are modeled as a realization of $Y_n$.

While $X(t)$ may be a continuous or discrete time process, the value of $X(t)$ at observation times is of particular interest, and so we write $X_n = X(t_n)$.
Initial values may be specified at a time $t_0<t_1$, and we set $X_0 = X(t_0)$.
We adopt the notation that for any integers $a$ and $b$, $\seq{a}{b}$ is the vector $(a, a+1, \ldots, b-1, b)$, and use the convention that $\seq{a}{b} = \emptyset$ if $b < a$. 
Similarly, we use the notation in subscripts to denote collections of random variables, such as $X_{0:N} = (X_0, \ldots, X_N)$, and use the same basic notation for $Y_{1:N}$, and $t_{0:N}$.  
We assume that the joint probability density $f_{X_{0:N}, Y_{1:N}}(x_{0:N}, y_{1:N}; \, \theta)$ exists, with respect to Lebesgue measure, or a counting measure if the set of possible values is discrete.

In a panel data analysis scenario, data are collected for each unit $\unit \in \seq{1}{\Unit}$, where $\Unit$ is the number of units in the panel.
Because the data generating process of each unit is assumed to be dynamically independent, we may model the panel data as a collection of related POMP models, which we call a PanelPOMP.
To distinguish between each unit POMP model, we denote the measurement and latent process for each unit with subscript $\unit \in \seq{1}{\Unit}$.
Specifically, we write $X_{\unit, n}, Y_{\unit, n}$ to denote the values of the latent process $\{X_{\unit}(t), \, t \in \mathcal{T}_\unit \subset \R\}$ and measurement sequence for unit $\unit$ at the measurement times $t_{\unit, 1} < t_{\unit, 2} < \ldots < t_{\unit, N_\unit}$.
The number of observations $N_\unit$, observations times $t_{\unit, n}$, and time domain $\mathcal{T}_u$ do not need to match across units, though this is often the case.
We use the shorthand $\bm{X}, \bm{Y}$ to refer to the collection of all latent and observable process at observation times for all units in the panel.
Due to the independence of units, the joint density of the PanelPOMP model can be written as
\begin{align}
f_{\bm{X}, \bm{Y}}(\bm{x}, \bm{y}; \, \theta) &= \prod_{u = 1}^U f_{X_{u, 0:N_u}, Y_{u,1:N_u}}(x_{u, 0:N_u}, y_{u, 1:N}; \, \theta) \label{eq:ppomp} \\
&\hspace{-2cm} = \prod_{u = 1}^Uf_{X_{u, 0}}(x_{u, 0};\, \theta)\prod_{n = 1}^{N_u}f_{Y_{u, n}|X_{u, n}}(y_{u, n} | x_{u, n} ;\, \theta)f_{X_{u, n} | X_{u, n-1}}(x_{u, n}|x_{u, n-1}; \, \theta), \label{eq:ppomp2}
\end{align}
where Eq.~\ref{eq:ppomp} arises from the assumption of dynamically independent units, and Eq.~\ref{eq:ppomp2} is a result of the Markov property of $X_{u}(t)$ and the conditional independence of $Y_{u, n}$.

Our goal is to make inferences about the parameter $\theta$ by maximizing the likelihood function
\begin{align}
L(\theta;\bm{y}^*) &= \int f_{\bm{X}, \bm{Y}}(\bm{x}, \bm{y}^*; \, \theta)\, d\bm{x}, \label{eq:likelihood}
\end{align}
where $\bm{y}^*$ denotes the collection observations from the entire panel.
Although the unit densities in Eq.~\ref{eq:ppomp} can be factored as a result of the dynamic independence between measurement units, a defining feature of a panel model is that the parameter vector $\theta$ remains relevant across all the dynamic systems.
Therefore, inference methodology used to obtain estimates of $\theta$ should use data from each unit in the PanelPOMP.
A special case of particular interest arises when $\theta = (\phi, \psi_{1:U})$, where $\phi \in \Theta_\phi \subset \mathbb{R}^{\nshared}$ is a vector of parameters that are shared across each unit, and $\psi_{u} \in \Theta_\psi \subset \mathbb{R}^{\nspecific}$ are parameters that are specific to unit $u$, formally,
$
f_{X_{u, 0:N_u}, Y_{u,1:N_u}}(x_{u, 0:N_u}, y_{u, 1:N}; \, \theta) = f_{X_{u, 0:N_u}, Y_{u,1:N_u}}(x_{u, 0:N_u}, y_{u, 1:N}; \, \phi, \psi_u).
$

Algorithm~\ref{alg:mpif} describes an iterated filtering algorithm for PanelPOMP models that obtains the MLE for both shared $(\phi)$ and unit specific $(\psi_u)$ parameters.
With $\mathrm{MARGINALIZE}=\mathrm{FALSE}$, this is the PIF algorithm of \citet{breto20}.
Our innovation occurs when $\mathrm{MARGINALIZE}=\mathrm{TRUE}$, and we call this the  marginalized panel iterated filter (MPIF). 
This small difference requires new approaches to theoretical analysis, and it is not immediately clear if an when this sequence of approximations may converge to the exact MLE.
However, we demonstrate that this modification has dramatic consequences for scalability.

Mathematically, MPIF adds an additional step to PIF by marginalizing out the unit-specific parameters that are irrelevant for the unit currently being filtered.
Because the parameter distributions are represented via Monte Carlo samples, marginalization is carried out by decoupling elements of the particles vectors representing parameters $\psi_{-u} = \{\psi_{k}\}_{k\neq u}$ from the elements of the vector representing $(\phi, \psi_u)$.
This decoupling can occur by not updating the particles representing $\psi_{-u}$ when using weights obtained from data in unit $u$.
In other words, instead of resampling all parameter particles in line~\ref{mpif:update} of Algorithm~\ref{alg:mpif}, we only resample particles for the shared parameters $\phi$ and the unit-specific parameters $\psi_u$ related to the data of the specific unit under consideration.

\begin{algorithm}[H]
   \caption[MPIF]{\textbf{Iterated Filtering for PanelPOMP models} \\
    {\bf Inputs}:\\\hspace{\textwidth}
    Simulator of initial density, $f_{X_{u, 0}}(x_{u, 0}; \, \theta)$ for $u$ in $\seq{1}{U}$.\\\hspace{\textwidth}
    Simulator of transition density, $f_{X_{u, n}|X_{u, n-1}}(x_{u, n}|x_{u, n-1}; \, \theta)$ for $u$ in $\seq{1}{U}$, $n$ in $\seq{1}{N_u}$.\\\hspace{\textwidth}
    Evaluator of measurement density, $f_{Y_{u, n}|X_{u, n}}(y_{u, n}|x_{u, n} ; \, \theta)$ for $u$ in $\seq{1}{U}$, $n$ in $\seq{1}{N_u}$.\\\hspace{\textwidth}
    Data $y_{u, n}^*$,for $u$ in $\seq{1}{U}$, $n$ in $\seq{1}{N_u}$.\\\hspace{\textwidth}
    Number of iterations, $M$.\\\hspace{\textwidth}
    Number of particles, $J$.\\\hspace{\textwidth}
    Starting parameter swarm, $\Theta_j^0 = \big(\Phi_j^0, \Psi_{1:U, j}^0\big)$ for $j \in \seq{1}{J}$, $u \in \seq{1}{U}$.\\\hspace{\textwidth}
    Simulator of perturbation densities, $h_{u, n}(\cdot | \varphi; \sigma)$ for $m \in \seq{1}{M}$, $u \in \seq{1}{U}$, $n \in 0:N_u$. \\\hspace{\textwidth}
    Perturbation Sequence $\sigma_{1:U, 1:M}$.\\\hspace{\textwidth}
    Logical variable determining marginalization, MARGINALIZE. \\\hspace{\textwidth}
    {\bf Output:} \\\hspace{\textwidth}
    Final parameter swarm, $\Theta_{j}^{m} = \big(\Phi_j^m, \Psi_{\seq{1}{U}, j}^m\big)$ for $j \in \seq{1}{J}$, $u \in \seq{1}{U}$.
    \label{alg:mpif}}
\For{$m \in 1:M$}{
  Set $\Theta_{0, j}^{F, m} = \Theta_{j}^{m-1} = (\Phi_j^{m-1}, \Psi_{1:U, j}^{m-1})$ for $j \in \seq{1}{J}$\;
    \For{$u \in \seq{1}{U}$}{
        Set $\Theta_{u, 0, j}^{F, m} = (\Phi_{u, 0, j}^{F, m}, \Psi_{1:U, 0, j}^{F, m}) \sim h_{u, 0}\big(\cdot | \Theta^{F, m}_{u-1, j}; \sigma_{u, m}\big)$ \label{line:startu}\; 
      Initialize $X_{u, 0, j}^{F, m} \sim f_{X_{u, 0}}(x_{u, 0}; \Phi_{u, 0, j}^{F, m}, \Psi_{u, 0, j}^{F, m})$ for $j \in \seq{1}{J}$\;
        \For{$n \in \seq{1}{N_u}$} {
          Set $\Theta_{u, n, j}^{P, m} = (\Phi_{u, n, j}^{P, m}, \Psi^{P, m}_{1:U, n, j}) \sim h_{u, n}\big(\cdot |\Theta_{u, n-1, j}^{F, m}; \sigma_{u, m}\big)$ for $j \in \seq{1}{J}$ \label{line:perturbations}\;
          $X_{u, n, j}^{P, m} \sim f_{X_{u, n}|X_{u, n-1}}\big(x_{u, n}|X_{u, n-1, j}^{F, m}; \Phi_{u, n, j}^{P, m}, \Psi_{u, n, j}^{P, m}\big)$ for $j \in \seq{1}{J}$ \label{line:Xpred}\;
          $w_{u, n, j}^m = f_{Y_{u, n}|X_{u, n}}\big(y_{u, n}^*|X_{u, n, j}^{P, m};\Phi_{u, n, j}^{P, m}, \Psi_{u, n, j}^{P, m}\big)$ for $j \in \seq{1}{J}$ \label{line:weights}\;
          Draw $k_{\seq{1}{j}}$ with $P(k_j = i) = w_{u, n, i}^m / \sum_{v = 1}^J w_{u, n, v}^m$ for $i, j \in \seq{1}{J}$\;
          Set $X_{u, n, j}^{F, m} = X_{u, n, k_j}^{P, m}$, and $\big(\Phi_{u, n, j}^{F, m}, \Psi_{u, n, j}^{F, m}\big) = \big(\Phi_{u, n, k_j}^{P, m}, \Psi_{u, n, k_j}^{P, m}\big)$ for $j \in \seq{1}{J}$\;
            \uIf{$\mathrm{MARGINALIZE}$}{$\Psi^{F,m}_{\tilde u,n,j} = \Psi^{P,m}_{\tilde u,n,j}$ for all $\tilde u \neq u$, $j=\seq{1}{J}$\label{mpif:update}} 
       \Else{$\Psi^{F,m}_{\tilde u,n,j} = \Psi^{P,m}_{\tilde u,n,k_j}$ for all $\tilde u \neq u$, $j=\seq{1}{J}$}
        }
      Set $\Theta_{u, j}^{F, m} = \big(\Phi_{u, N_u, j}^{F, m}, \Psi_{u, N_{1:U}, j}^{F, m}\big)$ for $j \in \seq{1}{J}$ \label{line:endu}\;
    }
  Set $\Theta_j^{(m)} = \Theta_{U, j}^{F, m}$ for $j \in \seq{1}{J}$\;
}
\end{algorithm}

The algorithmic complexity of both PIF and MPIF is $O(JMNU)$, where $M$ represents the number of iterations, $J$ is the number of particles, $U$ the number of units in the panel, and $N$ is the mean of $\{N_1, \ldots, N_U\}$.
Because MPIF does not require tracking the history of each particle, we achieve a minor reduction in computational overhead associated with the PIF algorithm.
However, as discussed in Section~\ref{sec:depletion}, the primary advantage of MPIF is that it typically exhibits lower Monte Carlo uncertainty (requiring smaller $J$) and generally converges in fewer iterations (requiring smaller $M$). 

Central to the success of iterated filtering algorithms is the fact that the repeated use of the posterior distribution from one Bayesian update as the prior distribution for the next iteration ultimately leads to a degenerate distribution centered at the MLE (see Section~\ref{sec:bayes}).
This idea is key to deriving the proof of Theorem~\ref{theorem:pif}, which extends existing theory \citep{chen25} for the convergence of the PIF algorithm. 

\begin{theorem}\label{theorem:pif}
  Consider a PanelPOMP model defined by Eq.~\ref{eq:ppomp}, and let $\Theta \subset \R^{\nshared + \Unit\nspecific}$ be a compact set that satisfies condition~\ref{assumption:regular} in Appendix~\ref{sec:assumptions}, and assume there exists a $\delta > 0$ such that $\{\theta \in \Theta: |\theta - \hat{\theta}|_2 < \delta\} \subseteq \Theta$, where $\hat{\theta} = \argmax_{\theta \in \Theta} L(\theta;\bm{y}^*)$.
    Denote the output of Algorithm~\ref{alg:mpif} without marginalization as $\Theta_{1:J}^{(M)}$, and assume that the model satisfies conditions~\ref{assumption:mle1}--\ref{assumption:mle3}, and the sequence of perturbation defined by $h_{u, n}(\cdot | \varphi, \sigma_{1:U, 1:M})$ satisfy conditions~\ref{assumption:kernel1}--\ref{assumption:kernel4}.
  Then there exists some positive sequences $\{C_M\}_{M \geq 1}$ and $\{\epsilon_M\}_{M \geq 1}$ where $\lim_{M \rightarrow \infty}\epsilon_M = 0$ such that for all $(J, M) \in \mathbb{N}^2$, 
  $$
  E\bigg[\Big|\frac{1}{J}\sum_{i=1}^J \Theta_j^{(M)} - \hat{\theta}\Big|_2\bigg] \leq \frac{C_M}{\sqrt{J}} + \epsilon_M
  $$
\end{theorem}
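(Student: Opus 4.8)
\medskip
\noindent\textbf{Proof strategy.}
The plan is to reduce the statement to the single-process convergence theory of \citet{chen25}. As set out in Appendix~\ref{sec:panelTheory}, Algorithm~\ref{alg:mpif} with $\mathrm{MARGINALIZE}=\mathrm{FALSE}$ coincides with the IF2 iterated filtering algorithm of \citet{ionides15} applied to a single POMP obtained by concatenating the $U$ unit time series into one series of length $\sum_{u=1}^{U}N_u$: the latent state at the super-time index for $(u,n)$ carries the full parameter vector $\theta=(\phi,\psi_{1:U})$ together with $X_{u,n}$, the measurement density at that super-time is $f_{Y_{u,n}\mid X_{u,n}}$, the dynamics reset the $X$-component through $f_{X_{u,0}}$ at each unit boundary, and the parameter random walk uses scale $\sigma_{u,m}$ throughout unit $u$'s block of sweep $m$. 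One full sweep $m\mapsto m+1$ is then exactly one IF2 iteration on this model, so $\Theta_{1:J}^{(M)}$ is the IF2 swarm after $M$ iterations. First I would verify that the target is unchanged: since $L(\theta;\bm{y}^*)=\prod_{u=1}^{U}L_u(\theta)$ by Eq.~\ref{eq:ppomp}, where $L_u$ is the likelihood contribution of unit $u$, the concatenated model has likelihood $L(\,\cdot\,;\bm{y}^*)$ and hence the same maximizer $\hat{\theta}$.

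Next I would translate the hypotheses. Conditions~\ref{assumption:regular} and \ref{assumption:mle1}--\ref{assumption:mle3}, together with the stated $\delta$-interiority of $\hat{\theta}$ in the compact set $\Theta$, are meant to imply precisely the regularity premises \citet{chen25} require of a single POMP --- in particular that the one-step weights $f_{Y_{u,n}\mid X_{u,n}}(y^*_{u,n}\mid x_{u,n};\theta)$ are bounded above and below and continuous in $\theta$, and that $\hat{\theta}$ is a well-separated interior maximizer --- while conditions~\ref{assumption:kernel1}--\ref{assumption:kernel4} supply the needed properties of the perturbation kernels $h_{u,n}$ and the shrinkage $\sigma_{u,m}\to 0$. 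I would also check here that these panel-level conditions are genuinely weaker than those imposed by \citet{breto20}, which is where the ``weaker conditions'' claim in the theorem's lineage is discharged.

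With the reduction in place, the inequality follows from the two-part decomposition standard to iterated-filtering arguments. Let $\mu^{(M)}$ denote the deterministic measure on $\Theta$ produced by the $M$-fold composition of the idealized ($J\to\infty$) Bayes maps. I would write
\begin{equation*}
E\Big[\Big|\tfrac{1}{J}\textstyle\sum_{j=1}^{J}\Theta_j^{(M)}-\hat{\theta}\Big|_2\Big]\;\le\;E\Big[\Big|\tfrac{1}{J}\textstyle\sum_{j=1}^{J}\Theta_j^{(M)}-\textstyle\int\theta\,d\mu^{(M)}(\theta)\Big|_2\Big]\;+\;\Big|\textstyle\int\theta\,d\mu^{(M)}(\theta)-\hat{\theta}\Big|_2 .
\end{equation*}
The first term is the Monte Carlo error of a particle filter run over $\sum_u N_u$ resampling steps in each of $M$ sweeps; the usual $L^2$ particle-filter analysis, propagated inductively over steps and sweeps --- using boundedness of the weights to keep the per-step error constants finite, and compactness of $\Theta$ so that $\theta\mapsto\theta$ is a bounded test function --- gives a bound $C_M/\sqrt{J}$ with $C_M$ finite for each $M$ (possibly growing in $M$, $U$, and $\max_u N_u$) but independent of $J$. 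The second term is $\epsilon_M$: with no perturbation, one idealized sweep multiplies the prior density by $\prod_u L_u(\theta)=L(\theta;\bm{y}^*)$, so $M$ sweeps yield a density proportional to $L(\theta;\bm{y}^*)^M$ against $\mu^{(0)}$ --- the data-cloning recursion of \citet{lele07} --- which concentrates weakly at the unique interior maximizer $\hat{\theta}$ by a Laplace-type estimate under the identifiability condition, so its mean tends to $\hat{\theta}$; a continuity argument then controls the discrepancy between these idealized iterates and the genuinely perturbed ones as $\sigma_{u,m}\to 0$. Adding the two bounds gives the claim.

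\medskip
\noindent\textbf{Main obstacle.}
I expect the principal difficulty to be adapting \citet{chen25} to the block-structured, non-homogeneous setting of the concatenated model: their analysis is phrased for a time-homogeneous perturbation schedule on a single process, whereas here the perturbation scale changes from unit to unit within a sweep and the latent state is reset at unit boundaries. I would need to confirm --- and, where necessary, re-derive --- that each ingredient above, namely the contraction/concentration estimate for the iterated Bayes map and the inductive Monte Carlo bound, is insensitive to time-varying $\sigma$ and to these resets, since nothing in the mechanism relies on homogeneity in the time index. A secondary, more routine point is making the Laplace-type concentration estimate for $L(\theta;\bm{y}^*)^M$ uniform enough to survive the perturbation-continuity step.
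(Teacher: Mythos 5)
Your proposal takes essentially the same route as the paper: express the PanelPOMP as a single ``long'' POMP by stacking the unit time series, verify that conditions \ref{assumption:regular}, \ref{assumption:mle1}--\ref{assumption:mle3}, and \ref{assumption:kernel1}--\ref{assumption:kernel4} translate into the hypotheses of Theorem~4 of \citet{chen25}, and invoke that theorem (your two-term Monte Carlo/bias decomposition is a sketch of what happens inside their proof rather than an extra ingredient). The obstacles you flag are handled in the paper by construction: the unit-boundary resets are absorbed by inserting extra super-time indices at which the measurement density is identically $1$ and the transition density equals $f_{X_{u,0}}$, and the unit-varying perturbation scale is already covered because conditions \ref{assumption:kernel1}--\ref{assumption:kernel4} are stated for an arbitrary sequence of random perturbation measures indexed by the cloned time $\nclone$, so no re-derivation of the homogeneous case is needed.
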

\begin{proof}[Proof outline]
  This theorem is an application of Theorem~4 of \citet{chen25} to PanelPOMP models, following the approach of \citet{breto20}, who showed that PanelPOMP models can be expressed as a lower-dimensional POMP model. 
  A full proof is provided in Appendix~\ref{sec:panelTheory}.
\end{proof}

Theorem~\ref{theorem:pif} formally provides guarantees for a variety of iterated filtering algorithms for PanelPOMP models, but as we demonstrate in Section~\ref{sec:sims}, the applicability of these algorithms often have scalability issues.
The marginalization step in Algorithm~\ref{alg:mpif} mitigates these issues but introduces a nonlinearity that invalidates the data cloning principle that is key to previous theoretical work on iterated filtering algorithms. 
In the following section, we briefly discuss data cloning and its relationship to the MPIF algorithm.

\section{Iterating marginalized Bayes maps}\label{sec:bayes}

If we denote $\pi_i(\theta)$ as the posterior distribution of the parameter vector $\theta$ after the $i$th Bayesian update, and $\bm{y}^*$ as the observed data, we can express an iterated Bayesian update as the following:
\begin{align*}
\pi_1(\theta) &\propto f(\bm{y}^* ;\, \theta)\,\pi_0(\theta), \\
\pi_2(\theta) &\propto f(\bm{y}^* ;\, \theta)\, \pi_1(\theta) \propto f^2(\bm{y}^* ;\, \theta)\, \pi_0(\theta),\\[-1em]
&\,\,\, \vdots \\[-1em]
\pi_m(\theta) &\propto f^m(\bm{y}^* ;\, \theta)\, \pi_0(\theta).
\end{align*}
In this representation, $f(\bm{y}^*; \,\theta)$ is the likelihood function (Eq.~\ref{eq:likelihood}), and $\pi_0(\theta)$ is the original prior distribution for $\theta$.
If we let $m\rightarrow \infty$, the effect of the initial prior distribution diminishes, and the $m$th posterior has all of its mass centered at the MLE.
This can be shown by taking the limit of $\pi_m(\theta) / \pi_m(\hat{\theta})$ as $m$ goes to infinity: if $\theta = \hat{\theta}$, then the limit is one, and zero otherwise \citep{lele07}.

The data cloning algorithm is useful for estimating the MLE in situations where the likelihood function is known or readily evaluated up to a constant of proportionality.
In this scenario, practitioners can leverage existing Bayesian software in order to obtain a maximum likelihood estimate (MLE) \citep{auger21,ponciano09}.
However, for nonlinear non-Gaussian state-space models, the likelihood function is generally inaccessible \citep{haggstrom25}.
In these cases, simulation-based inference techniques such as the particle filter \citep{arulampalam02} can reliably approximate the likelihood function.
An iterated Bayes procedure could in theory be used in conjunction with a particle filter to estimate the MLE.
However, a well-known issue with particle filters is the difficulty in accurately sampling the posterior distribution of fixed model parameters due to particle depletion \citep{Liu01}.
Consequently, a direct application of the iterated Bayes approach using particle filters for MLE estimation is impractical.

Iterated filtering algorithms overcome the issues associated with particle depletion by introducing a random walk for model parameters, thereby rescuing the degenerate particle representation of model parameters.
Early analysis of this procedure showed that if the random walk standard deviations are small, then this modification still leads to an approximation of the iterated Bayes algorithm outlined above, where the final particle mass is still centered at the MLE \citep{ionides15}.
The introduced perturbations of parameter values are necessary for inference, but they also introduce a loss of information \citep{Liu01}.
Therefore, in practice, the random perturbations are reduced as the number of iterations increases.
Recent theoretical analysis demonstrates that the algorithm still concentrates on the MLE when perturbations are reduced over time \citep{chen25}.

Because PanelPOMP models are a special case of POMP models, this same approach can theoretically be used to estimate the MLE.
However, the particle filter famously suffers from the curse-of-dimensionality.
That is, the approximation error of the particle filter grows exponentially with the number of units in a panel model \citep{bengtsson08,snyder08}.
The panel iterated filter (PIF) of \citet{breto20} partially addresses this issue by stacking the individual time series into a single long time series, and modifying the perturbation kernel for parameter particles to mitigate loss of information, as discussed in Section~\ref{sec:discussion}.
Importantly, the PIF algorithm still approximates the iterated Bayes map when the random walk standard deviations are small (Theorem~1 of \citet{breto20}), and can be shown to converge when the perturbations shrink over time (our Theorem~\ref{theorem:pif}).

The additional marginalization step in the MPIF algorithm introduces a nonlinear transformation at each iteration, changing the distribution that is approximated by the algorithm. 
As such, existing theoretic approaches for iterated filtering algorithms \citep{chen25} for this class of models are insufficient to demonstrate convergence of the algorithm. 
We explore why this is the case in the context of \emph{marginalized data cloning}. 
Let $\theta = (\phi, \psi_{1:U})$ denote the parameter vector for the panel model, where $\phi$ denotes the parameters that are shared by all $U$ units, and $\psi_u$ are the parameters that are only relevant to unit $u$.
Similarly, we write $y_u^* = y^*_{u, 1:N_u}$ to denote the time series data for unit $u$.
By stacking the times series into a single time series and iteratively filtering one at a time and ignoring parameter perturbations, the Bayes map that is approximated by a single sub-iteration of PIF can be written sequentially as:
\begin{align}
\pi_{m, u}(\theta) &\propto f_{u}(y^*_u;\, \theta)\, \pi_{m, u-1}(\theta) = f_{u}(y^*_u;\, \phi, \psi_u)\, \pi_{m, u-1}(\theta), \label{eq:PIFupdate}
\end{align}
where $\pi_{m, u}(\theta)$ is the parameter distribution at step $(m, u)$, and we adopt the convention that $\pi_{0, 0}(\theta) = \pi_0(\theta)$ is the initial prior density and $\pi_{m, U} = \pi_{m + 1, 0} = \pi_{m + 1}$.
This update is completed for each unit $u \in \seq{1}{U}$ which we call \emph{unit} iterations, then for iterated for $m \in \seq{1}{M}$, which we call \emph{complete} or \emph{full} iterations. 

An important observation regarding the representation in Eq.~\ref{eq:PIFupdate} is that each unit-specific likelihood function $f_{u}(y^*_u;\, \phi, \psi_u)$ contributes directly to the information about the shared parameter vector $\phi$ and its respective unit-specific parameter vector $\psi_u$, but not that of the unit-specific parameters $\psi_{-u}$.
Despite this, the Bayes update in the PIF algorithm necessitates updating the posterior distribution of all parameters at each unit-iteration.
Because the parameter distributions are represented via Monte Carlo samples (called particles), this implies that the PIF algorithm re-weights particles representing $\psi_{-u}$ based on a likelihood that does not contain direct information about the parameters.
For example, within a single $\nmif$-iteration, the particles representing the distribution of $\psi_U$ have been resampled $\sum_{\unit = 1}^{U - 1}N_u$ times before encountering the data $y^*_U$, the only subset of data containing direct information about $\psi_U$.
This process can lead to significant particle depletion (see Figure~\ref{fig:depletion}), particularly if $U$ or $N_u$ are large.

If the prior distribution used in Eq.~\ref{eq:PIFupdate} is independent across parameters, then the posterior distribution of the sub-vector $\psi_{-u}$ will be unchanged.
In this case, the particles representing the density for these parameters do not need to be resampled, which would avoid the issue of particle depletion.
The use of independent priors is common practice in Bayesian statistics, but each complete iteration of Eq.~\ref{eq:PIFupdate} introduces parameter dependence via the likelihood function.
This observation leads to the proposal of the marginalized PIF algorithm (MPIF), where the intermediate posterior distributions are made independent by marginalization before use as a prior distribution in the subsequent unit-iteration. 
A representation of a unit-iteration following this approach is given in Eqs.~\ref{eq:margBayes} and~\ref{eq:MPIFupdate}.
\begin{align}
\tilde{\pi}_{m, u}(\theta) &\propto f_{u}(y^*_u;\, \phi, \psi_u)\, \pi_{m, u-1}(\theta) \label{eq:margBayes}\\
\pi_{m, u}(\theta) &\propto \int \! \tilde{\pi}_{m, u}(\theta) \, d\phi \, d\psi_u \, \times \int \! \tilde{\pi}_{m, u}(\theta) \, d\psi_{-u} \label{eq:MPIFupdate}.
\end{align}

Here we have described iterated filtering algorithms while ignoring parameter perturbations and Monte Carlo evaluations of the likelihood function.
Each of these components play an important role in the practicality of iterated filtering, but existing theoretical justifications rely on the convergence of data cloning to the MLE, and show that the convergence still holds in spite of the additional complexities.
A natural question is whether iterating Eqs.~\ref{eq:margBayes}--\ref{eq:MPIFupdate} results in a probability distribution with all mass centered at the MLE(s), similar to the case without marginalization. 
The nonlinearization introduced by the marginalization, however, adds difficulty to the task of calculating, or bounding, the density.
In particular, previous approaches that rely on the linearization of unnormalized Bayes updates \citep[e.g.,][]{ionides15} are no longer applicable.
In the following subsection, we show that iterating Eqs.~\ref{eq:margBayes}--\ref{eq:MPIFupdate} does converge to the MLE when the likelihood is Gaussian. 

\subsection{Consistency for Gaussian models}

For Gaussian models, conditioning and marginalization can be carried out exactly.
The properties of this analytically tractable special case is relevant to the broader class of models that is well approximated by Gaussian models, for example, models satisfying the widely studied property of local asymptotic normality (LAN) \citep{lecam00}.
We show in Theorem~\ref{theorem:GG} that MPIF for a Gaussian model converges to the exact MLE as long as unit-specific parameters are not highly informative about shared parameters.

As before, we assume there are $U \geq 1$ units, and the likelihood of each unit is defined by $L_u(\theta;\, y^*_u) = f_u(y_u^*; \, \phi, \psi_u)$, and the likelihood of the entire model is $L(\theta;\, \bm{y}^*) = \prod_{u = 1}^U L_{u}(\theta; \, y_u^*)$.
In what follows, we assume $\phi \in \R$ and $\psi_u \in \R$ for all $u \in \seq{1}{U}$ in order to ease the notation and analysis. 

\begin{theorem}
  Let $f_u(y_u^*; \, \phi, \psi_u)$ be the density that corresponds to a Gaussian distribution with mean $(\phi^*, \psi_u^*)$ and precision $\Lambda_u^* \in \R^{2\times2}$. 
Assume that $\Lambda^*_u$ satisfies
\begin{align}
\big[\Lambda^*_{u}\big]_{1, 2}^2 <
  \frac{
    4 \big[\Lambda^*_{u}\big]_{1, 1}\, \big[\Lambda^*_{u}\big]^2_{2, 2}
    \, \sum_{k=1}^U \big[\Lambda^*_{k}\big]_{1, 1}
  }{
    \Big( \big[\Lambda^*_{u}\big]_{2, 2} + \sum_{k=1}^U\big[\Lambda^*_{k}\big]_{1, 1}\Big)^2
  }. \label{eq:GausAssumption}
\end{align}
  If the initial prior density $\pi_0(\theta) = \pi_{0, 0}(\theta)$ corresponds to a Gaussian distribution with mean $\mu_0$ and covariance $\Sigma_0$, then the density of the $m$th iteration of Eq.~\ref{eq:MPIFupdate} corresponds to a Gaussian distribution with mean $\mu_m \in \R^{U+1}$ and covariance $\Sigma_{m} \in \R^{U+1\times U+1}$ such that $\mu_m \rightarrow (\phi^*, \psi_1^*, \ldots, \psi_U^*)$ and $\|\Sigma_m\|_{2} \rightarrow 0$. That is, the algorithm converges in probability to the MLE. \label{theorem:GG}
\end{theorem}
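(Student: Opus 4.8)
The plan is to exploit the fact that, for a Gaussian model, every operation in the recursion of Eqs.~\ref{eq:margBayes}--\ref{eq:MPIFupdate} maps Gaussians to Gaussians, so the whole algorithm collapses to a deterministic recursion on a mean vector $\mu_m$ and a covariance matrix $\Sigma_m$, and the theorem becomes the statement that $\Sigma_m\to 0$ and $\mu_m\to\hat\theta$. Here $\hat\theta=(\phi^*,\psi_1^*,\dots,\psi_U^*)$ is the MLE because each unit likelihood is individually maximized at $(\phi^*,\psi_u^*)$ and these optima are mutually compatible. First I would record the induced one-unit-iteration maps. Writing $\bar\Lambda_u^*$ for the $(U+1)\times(U+1)$ matrix carrying $\Lambda_u^*$ in the $(\phi,\psi_u)$ block and zero elsewhere, and $E_u$ for the $(U+1)\times 2$ selector of the coordinates $(\phi,\psi_u)$, the Bayes step Eq.~\ref{eq:margBayes} sends the precision $\Sigma_{m,u-1}^{-1}$ to $\Sigma_{m,u-1}^{-1}+\bar\Lambda_u^*$, equivalently (by Woodbury) $\tilde\Sigma_{m,u}=\Sigma_{m,u-1}-\Sigma_{m,u-1}E_u\big((\Lambda_u^*)^{-1}+E_u^\top\Sigma_{m,u-1}E_u\big)^{-1}E_u^\top\Sigma_{m,u-1}$, and sends the mean to $\mu_{m,u}=\mu_{m,u-1}+\tilde\Sigma_{m,u}\bar\Lambda_u^*(\hat\theta-\mu_{m,u-1})$; the marginalization Eq.~\ref{eq:MPIFupdate} leaves the mean unchanged and replaces $\tilde\Sigma_{m,u}$ by its block diagonal with respect to the split $(\phi,\psi_u)\,|\,\psi_{-u}$. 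Two points matter: the covariance recursion is autonomous (no dependence on the mean), and the Woodbury form makes it continuous on the whole cone of positive semidefinite matrices.

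The core of the argument is $\|\Sigma_m\|_2\to 0$. Let $G$ be the composition of the $U$ unit-iteration maps on covariances, so $\Sigma_m=G(\Sigma_{m-1})$. I would first show $\|\Sigma_m\|_2$ is non-increasing: within a unit-iteration $\tilde\Sigma_{m,u}\preceq\Sigma_{m,u-1}$ (a positive semidefinite matrix is added to the precision), and block-diagonalization can only shrink the operator norm (the norm of a principal submatrix is at most that of the full matrix, and the norm of a block-diagonal matrix is the maximum of the block norms); hence $\|\Sigma_m\|_2\downarrow c\ge 0$. It then remains to rule out $c>0$. Tracing the equality case of this chain, $\|G(\Sigma)\|_2=\|\Sigma\|_2$ would force a top eigenvector of $\Sigma$ to lie in $\ker\bar\Lambda_u^*$ for \emph{every} $u$ and to survive each block-diagonalization; since $\bigcap_u\ker\bar\Lambda_u^*=\{0\}$ (each $\Lambda_u^*$ is invertible), this is impossible for $\Sigma\neq 0$. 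With compactness of $\{\Sigma:0\preceq\Sigma\preceq\|\Sigma_0\|_2 I\}$ and continuity of $G$, a subsequential limit $\Sigma_\infty$ of $(\Sigma_m)$ satisfies $\|G(\Sigma_\infty)\|_2=\|\Sigma_\infty\|_2=c$, so $c=0$. Hypothesis~\ref{eq:GausAssumption} enters precisely to make this quantitative: turning the qualitative obstruction into a uniform per-sweep decrease shows that the variance lost by decorrelating $\phi$ from $\psi_u$ at the marginalization step must be beaten, over a full sweep, by the shared-parameter information $\sum_k[\Lambda_k^*]_{1,1}$ reinjected by the $U$ likelihoods, and the failure of that balance is exactly $[\Lambda_u^*]_{1,2}^2 < 4[\Lambda_u^*]_{1,1}[\Lambda_u^*]_{2,2}^2\sum_k[\Lambda_k^*]_{1,1} / \big([\Lambda_u^*]_{2,2}+\sum_k[\Lambda_k^*]_{1,1}\big)^2$; under~\ref{eq:GausAssumption}, $G$ has no nonzero fixed point, and more strongly $\|\Sigma_m\|_2\to0$.

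Given $\|\Sigma_m\|_2\to0$, the mean follows from a diminishing-step-size argument. The operator-norm form of the Schur-complement bound gives $\|\Sigma_{m,u}^{-1}\|_2\le\|\Sigma_{m,u-1}^{-1}\|_2+\|\Lambda_u^*\|_2$, hence $\|\Sigma_m^{-1}\|_2=O(m)$, so when unit $u$ is processed at sweep $m$ the gain $\tilde\Sigma_{m,u}\bar\Lambda_u^*$ is of order $1/m$ in the directions on which $\bar\Lambda_u^*$ acts while fixing $\ker\bar\Lambda_u^*$ pointwise. Since $\hat\theta$ maximizes every unit likelihood, $\mu_{m,u}-\hat\theta=(I-\tilde\Sigma_{m,u}\bar\Lambda_u^*)(\mu_{m,u-1}-\hat\theta)$, and composing these contractions over $m=1,2,\dots$ the cumulative operator norm tends to $0$ because the active contraction factors are $1-\Theta(1/m)$ and $\sum_m 1/m=\infty$. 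Finally $\Theta^{(m)}\sim\normal(\mu_m,\Sigma_m)$ with $\mu_m\to\hat\theta$ and $\operatorname{tr}\Sigma_m\le(U+1)\|\Sigma_m\|_2\to0$ gives $\Theta^{(m)}\to\hat\theta$ in probability by Chebyshev, which is the assertion of Theorem~\ref{theorem:GG}.

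The main obstacle is the second step. Because marginalization is nonlinear and is not monotone in the Loewner order, the data-cloning linearization of iterated Bayes maps that drives the earlier theory is unavailable, and knowing only that $\|\Sigma_m\|_2$ decreases does not suffice — one must show it reaches $0$. I expect the cleanest route is to reduce to a coupled low-dimensional deterministic system for the shared-parameter variance and the $U$ unit-specific variances, compute its Jacobian at the degenerate fixed point $\Sigma=0$, and check that its spectral radius is below $1$ exactly when~\ref{eq:GausAssumption} holds; propagating this local contraction to a global statement, using the monotonicity of $\|\Sigma_m\|_2$ to steer the iterates into the basin of attraction, then closes the proof. The Gaussian bookkeeping and the mean step are comparatively routine.
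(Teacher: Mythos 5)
Your Gaussian bookkeeping and the overall architecture (an autonomous covariance recursion driven to zero, then a mean recursion written as a product of linear maps) match the paper's proof, but you have placed the hypothesis Eq.~\ref{eq:GausAssumption} in the wrong step, and the step where it is actually needed contains a genuine gap. The covariance part does not require Eq.~\ref{eq:GausAssumption} at all: the paper shows that after marginalization the $(\phi,\psi_u)$ block of the precision is diagonal with entries $\tau^{(\phi)}_{\nmif,\unit-1}+\Lambda^{(u)}_\phi-\Lambda_{\phi,u}^2/(\tau^{(u)}_{\nmif,\unit-1}+\Lambda_u)$ and the analogue for $\psi_u$, and positive definiteness of $\Lambda^*_u$ alone yields a uniform additive increment $\alpha>0$ per sweep, hence $\tau^{(\phi)}_{\nmif}=O(\nmif)\to\infty$ and $\|\Sigma_\nmif\|_2\to 0$ unconditionally. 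Your compactness/equality-case argument is a workable qualitative substitute here, but your claim that ``under Eq.~\ref{eq:GausAssumption}, $G$ has no nonzero fixed point'' misattributes the role of the condition, and your proposed closing move (a Jacobian of the covariance map at $\Sigma=0$ whose spectral radius drops below one exactly when Eq.~\ref{eq:GausAssumption} holds) is aimed at a statement that is true without any hypothesis beyond positive definiteness.

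The real difficulty is the mean. You write $\mu_{\nmif,\unit}-\hat\theta=(I-\tilde\Sigma_{\nmif,\unit}\bar\Lambda^*_u)(\mu_{\nmif,\unit-1}-\hat\theta)$ and assert that ``composing these contractions'' kills the error because the active factors are $1-\Theta(1/\nmif)$ and $\sum_\nmif 1/\nmif=\infty$. But $I-\tilde\Sigma_{\nmif,\unit}\bar\Lambda^*_u$ is not symmetric: its eigenvalues lie in $(0,1]$, yet its operator norm can exceed $1$, and a product of distinct matrices each with spectral radius below one need not tend to zero. This is precisely where Eq.~\ref{eq:GausAssumption} enters the paper's proof: writing the active $2\times 2$ block as $B_{\nmif,\unit}=\big(\Gamma^{(\phi,u)}_{\nmif,\unit-1}+\Lambda^*_u\big)^{-1}\Gamma^{(\phi,u)}_{\nmif,\unit-1}$, the paper expands $B^{-T}_{\nmif,\unit}B^{-1}_{\nmif,\unit}=I+\nmif^{-1}C_{\nmif,\unit}$ and shows that \emph{both} eigenvalues of $C_{\nmif,\unit}$ are asymptotically positive exactly when the determinant condition holds, which simplifies to Eq.~\ref{eq:GausAssumption}; only then does one get the operator-norm bound $\|B_{\nmif,\unit}\|_2\le 1-\Theta(1/\nmif)$ rather than a mere spectral-radius bound. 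Even granting that, one cannot simply multiply norms across a sweep, since each full matrix $A_{\nmif,\unit}$ acts as the identity off its $(\phi,\psi_u)$ block and so has operator norm exactly $1$; the paper's Lemma~\ref{lemma:bound} exploits the fact that the active blocks overlap only in the shared coordinate to conclude $\big\|\prod_{u}A_{\nmif,\unit}\big\|_2\le\max_u\|B_{\nmif,\unit}\|_2$, after which the divergence of the harmonic series finishes the argument. Your proposal contains neither the eigenvalue computation that converts spectral radius into operator norm (the sole place the hypothesis is consumed) nor a combination lemma of this kind, so the mean step does not go through as written.
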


The proof of Theorem~\ref{theorem:GG} is included in Appendix~\ref{appendix:Gaus}.
At each iteration of the algorithm, the marginalization step results in a loss of information about the likelihood surface. 
In the Gaussian setting, this equates to setting the covariance term between the shared and unit-specific parameters to be zero before performing a Bayes update. 
The assumption in Eq.~(\ref{eq:GausAssumption}) therefore helps mitigate this loss of information by controlling the size of the covariance in the likelihood surface. 
If the data are transformed to ensure that the likelihood covariance matrix has ones on the diagonal, that is, 
\begingroup
\renewcommand{\arraystretch}{0.7}
$
\Sigma^*_u = (\Lambda_u^*)^{(-1)} = \begin{pmatrix} 1 & \rho \\ \rho & 1\end{pmatrix}, 
$
\endgroup
then $\big[\Lambda^*_{u}\big]_{1, 1} = \big[\Lambda^*_{u}\big]_{2, 2} = 1 / (1-\rho^2)$ and $\big[\Lambda^*_{u}\big]_{1, 2} = \big[\Lambda^*_{u}\big]_{2, 1} = -\rho/(1-\rho^2)$.
In this case, the convergence condition in Eq.~(\ref{eq:GausAssumption}) becomes $\rho < 2 / \big(\sqrt{U}(1 + 1/U)\big)$. 

The proof of Theorem~\ref{theorem:GG} shows that the condition in Eq.~(\ref{eq:GausAssumption}) is sufficient for a convergence guarantee, but it may not be necessary. 
Furthermore, even some asymptotic bias may be tolerable compared to alternative algorithms that fail to scale. 
As demonstrated in Section~\ref{sec:depletion}, the particle depletion suffered by PIF can result in MPIF obtaining a better approximation of the unmarginalized map than the PIF algorithm.
In this case, MPIF is preferable to PIF even if the marginalized map results in a small bias.

Theorem~\ref{theorem:GG} provides convergence results for the algorithm in the absence of parameter perturbations. 
Using a similar setup, we can now consider the behavior of the algorithm with perturbations added to model parameters at each step. 
Let $f * g$ denote the convolution of probability densities $f$ and $g$. 
We assume that $h_{u, m}(\theta)$ is some perturbation density, and we modify the marginalized Bayes maps by adding this random noise at each unit-iteration.
\begin{align}
\tilde{\pi}'_{m, u}(\theta) &\propto f_{u}(y^*_u;\, \phi, \psi_u)\, \big(\pi'_{m, u-1} * h_{u, m}\big)(\theta) \label{eq:margBayesPerturb}\\
\pi'_{m, u}(\theta) &\propto \int \! \tilde{\pi}'_{m, u}(\theta) \, d\phi \, d\psi_u \, \times \int \! \tilde{\pi}'_{m, u}(\theta) \, d\psi_{-u} \label{eq:MPIFupdatePerturb},
\end{align}
Corollary~\ref{corollary:perturbed} shows that, under similar conditions as Theorem~\ref{theorem:GG}, marginalized data cloning with perturbations also converges to a point mass at the MLE if the likelihood is Gaussian.

\begin{corollary}\label{corollary:perturbed}
  Consider the setup of Theorem~\ref{theorem:GG}. If the parameter perturbations are Gaussian with covariance matrix $\sigma^2_m \Sigma_0$ for some initial covariance matrix $\Sigma_0 \in \R^{(U+1) \times (U+1)}$ and sequence $\sigma^2_m = o(1/m)$, then the $m$th iteration of Eqs.~\ref{eq:margBayesPerturb} and \ref{eq:MPIFupdatePerturb} corresponds to a Gaussian distribution with mean $\mu'_m \in \R^{U+1}$ and covariance $\Sigma'_m \in \R^{U+1\times U+1}$. If $\hat{\theta} = (\phi^*, \psi^*_1, \ldots, \psi^*_{U})$ denotes the MLE, then $|\mu'_m - \hat{\theta}|_2 \rightarrow 0$ and $\|\Sigma'_m\|_2 \rightarrow 0$.
\end{corollary}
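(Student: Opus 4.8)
The plan is to exploit the fact that every operation appearing in Eqs.~\ref{eq:margBayesPerturb}--\ref{eq:MPIFupdatePerturb} preserves Gaussianity, reduce the corollary to the behaviour of an explicit recursion on means and covariances, and then argue that perturbations whose magnitude vanishes are eventually dominated by the net contraction already established in the proof of Theorem~\ref{theorem:GG}. First I would show, by induction on the pair $(m,u)$, that each $\pi'_{m,u}$ is Gaussian and record the recursion it induces on its mean $\mu'_{m,u}$ and covariance $\Sigma'_{m,u}$. The base case is the Gaussian prior. For the inductive step I would check three facts: (i) convolution with the mean-zero Gaussian perturbation $h_{u,m}$ keeps the density Gaussian, leaves the mean unchanged, and adds $\sigma_m^2\Sigma_0$ to the covariance; (ii) the update in Eq.~\ref{eq:margBayesPerturb} multiplies the current Gaussian by $f_u(y^*_u;\,\phi,\psi_u)$, which, viewed as a function of $\theta=(\phi,\psi_{1:U})$, is a Gaussian density in the coordinates $(\phi,\psi_u)$ with precision $\Lambda^*_u$ and is constant in $\psi_{-u}$, so the product is Gaussian with precision the sum of the two (the degeneracy of $\Lambda^*_u$ on the full space is harmless since the convolved density is nondegenerate when $\Sigma_0\succ0$), with the usual precision-weighted mean; (iii) the marginalization in Eq.~\ref{eq:MPIFupdatePerturb} replaces the joint Gaussian by the product of its two marginals over the complementary coordinate blocks $\{(\phi,\psi_u)\}$ and $\{\psi_{-u}\}$, which is Gaussian with the same mean and with covariance obtained by zeroing the cross-block entries of the post-update covariance. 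Composing over $u=1,\dots,U$ yields the full-iteration map $(\mu'_{m-1},\Sigma'_{m-1})\mapsto(\mu'_m,\Sigma'_m)$, and setting $\sigma_m^2\equiv0$ recovers exactly the recursion analysed in Theorem~\ref{theorem:GG}.

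Next I would prove $\|\Sigma'_m\|_2\to0$, arguing via precisions. The proof of Theorem~\ref{theorem:GG} shows that under Eq.~\ref{eq:GausAssumption} one unperturbed full iteration yields a net precision gain of at least a fixed positive-definite amount once the covariance is small, i.e.\ $(\Sigma'_m)^{-1}\succeq(\Sigma'_{m-1})^{-1}+cI$ in the unperturbed recursion for all large $m$ and some $c>0$. In the perturbed recursion each of the $U$ convolutions merely adds $\sigma_m^2\Sigma_0$ to the running covariance; since the Bayes--marginalization composite is continuous and monotone in the Loewner order and $\sigma_m^2\to0$, along any subsequence on which the precision stayed bounded the combined effect of these additions on the end-of-iteration precision would be $o(1)$, so the net per-iteration gain would still exceed $c/2$ there, forcing the precision to diverge --- a contradiction. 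Hence $\|(\Sigma'_m)^{-1}\|_2\to\infty$, i.e.\ $\|\Sigma'_m\|_2\to0$; this is the Gaussian analogue of the stabilization argument used for shrinking-perturbation data cloning in \citet{chen25}.

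Finally, for $\mu'_m\to\hat{\theta}$ I would use that the mean-zero Gaussian convolutions do not move the mean, so the mean recursion is driven purely by the Bayes updates --- each pulling the $(\phi,\psi_u)$-block of the mean toward $(\phi^*,\psi^*_u)$ with weight set by the relative precisions of the (perturbed) prior and of $\Lambda^*_u$ --- while marginalization fixes the mean. Since $\|\Sigma'_m\|_2\to0$ the prior precision diverges, so the single-iteration pull toward $\hat{\theta}=(\phi^*,\psi^*_1,\dots,\psi^*_U)$ shrinks, but no faster than the precision grows; as in data cloning, the cumulative pull diverges like a harmonic sum, and because the targets $(\phi^*,\psi^*_u)$ are fixed and $\hat{\theta}$ is the MLE, this forces $|\mu'_m-\hat{\theta}|_2\to0$.

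The hardest step will be the covariance bound: the marginalization can reduce precision in individual directions, so monotone precision growth cannot be invoked directly, and one must lean on the \emph{net} contraction over a complete unit-cycle that Eq.~\ref{eq:GausAssumption} guarantees in the proof of Theorem~\ref{theorem:GG}, then verify quantitatively that vanishing perturbations do not erode it. The mean step inherits the familiar data-cloning tension that the per-step influence of the likelihood vanishes while its accumulated influence must not, which is why the summability behaviour of the per-iteration pull (ultimately governed by $\sigma_m^2 = o(1/m)$) needs to be tracked with care.
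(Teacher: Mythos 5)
Your proposal matches the paper's proof in both structure and substance: Gaussianity is propagated through the convolution, Bayes update, and marginalization steps; the covariance claim is obtained by showing that $\sigma_m^2=o(1/m)$ makes the perturbation's erosion of the per-cycle precision gain asymptotically negligible (the paper phrases this as exactly the boundedness-contradiction you describe); and the mean convergence follows because the perturbed update matrices are asymptotically equivalent to the unperturbed ones, so the contraction product from Theorem~\ref{theorem:GG} (via Lemma~\ref{lemma:matrix}) still tends to zero. No gaps --- the only difference is that the paper carries out your ``harmonic sum'' heuristic explicitly through the eigenvalue expansion $\|B_{m,u}\|_2 = 1 - \lambda^{(1)}_{m,u}/m + o(1/m)$.
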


The convergence of the Eqs.~\ref{eq:margBayesPerturb}--\ref{eq:MPIFupdatePerturb} can be partially explained using a common heuristic in Bayesian analysis: a more dispersed prior typically results in a posterior distribution that more closely resembles the likelihood function. 
In an iterated Gaussian setting, adding noise at each step results in intermediate prior distributions that have the same mean, but larger variance.
Therefore each iteration of Eq.~\ref{eq:margBayesPerturb} is expected to result in a mean closer to the MLE than the case without perturbations (Eq.~\ref{eq:margBayes}). 
Following this logic, if the perturbations are chosen to ensure that they eventually approach zero, then the convergence of the unperturbed marginalized data cloning algorithm heuristically implies the convergence of the perturbed version of the algorithm, as the perturbations to the prior densities at each step result in larger movements of the posterior density toward the MLE.
The proof of Corollary~\ref{corollary:perturbed} in Appendix~\ref{appendix:perturbed} demonstrates that this is true for the Gaussian model and the chosen perturbation schedule. 

In principle, the marginalization procedure can be applied at various steps in the data cloning process and one can obtain similar convergence results. 
In Figure~\ref{fig:DC}, we demonstrate the difference between data cloning and marginalized data cloning for a two parameter model with Gaussian likelihoods and priors with only a single unit but applying the marginalization for all parameters.
This useful visualization demonstrates how, even when the likelihoods can be computed exactly, the marginalization only makes small modifications to the intermediate distributions. 

\begin{figure}[ht]
\begin{knitrout}
\definecolor{shadecolor}{rgb}{0.969, 0.969, 0.969}\color{fgcolor}

{\centering \includegraphics[width=\maxwidth]{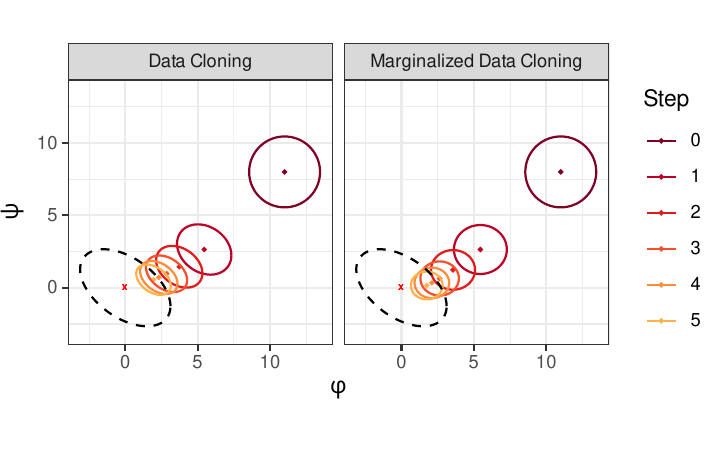} 

}

\end{knitrout}
\caption{\label{fig:DC}Data cloning and marginalized data cloning for two parameter model with Gaussian likelihoods and priors. The ellipses show the region of the parameter distribution that contains $95\%$ of the probability mass of the distribution. The black dashed line shows this region for the likelihood surface, and the red ``x" marks the MLE.
Theorem~\ref{theorem:GG} implies that the intermediate posterior densities will converge to a point mass at the MLE.
}
\end{figure}

\section{Simulation Studies}\label{sec:sims}

\subsection{Marginalization to reduce particle depletion}\label{sec:depletion}

The primary benefit and motivation of the marginalization step is improving the particle representations of the intermediate parameter distributions. 
In this sense, the marginalization step can be viewed as an attempt to take advantage of a bias-variance tradeoff. 
The marginalization procedure introduces a small amount of bias in the Bayesian posterior at each step in order to greatly reduce the variance of the particle representations of the distribution. 

We demonstrate this idea via a simple simulation study that explores the particle representation of parameter distributions with and without marginalization for only a single unit-iteration within Algorithm~\ref{alg:mpif}. 
For our model, we suppose $Y_{u, n}$ are independent and identically distributed (\iid) from a normal $\normal[\psi_u, 1]$ distribution, and do not specify a latent process model as it is irrelevant for this model.
We consider only $U = 2$ units, and $N_u = N = 100$ for all $u$.
For our prior distribution, we let $\Theta_{1:J}^{(0)} \overset{\text{\iid}}{\sim} \normal \big[ \mu_0, \, \Sigma_0 \big]$, and use $J = 1000$ particles to represent the joint parameter density.
This simple model and setup is selected so that the priors and likelihoods can be exactly calculated;
we can compare this to their particle representations using both versions of a single $u = 1$ iteration of Algorithm~\ref{alg:mpif} (Lines~\ref{line:startu}--\ref{line:endu}).

When iterating through unit $u = 1$, the Bayes map that is approximated by the un-marginalized filter requires an update to the particles that correspond to all model parameters for each time step $n \in \seq{1}{N_1}$. 
This reduces the number of unique particles that represent the intermediate posterior distributions for parameter $\Psi_2$ (Figure~\ref{fig:depletion}A). 
The number of unique particles representing $\Psi_1$ remains high as a result of the added parameter perturbations (line~\ref{line:perturbations}).
On the other hand, the MPIF algorithm does not require resampling the $\Psi_2$ particles during the $u = 1$ iteration, and thus maintains the same number of unique particles during this update (dashed horizontal line in Figure~\ref{fig:depletion}A). 

Figure~\ref{fig:depletion}B shows the filtered parameter particle swarm $\Theta_{1, 1:J}^{F, 1}$ after the single unit update under both versions of the algorithm compared to the Bayes posterior distribution that PIF approximates.
Although the marginalized version of the algorithm does not directly approximate this distribution, the particle swarm suffers less from particle depletion. 
This results in a better approximation of the intermediate posterior, despite introducing a small amount of bias.
Theorem~\ref{theorem:GG} provides sufficient conditions where the added bias at each step is negligible enough for the algorithm to converge to the MLE.

\begin{figure}[ht]
\begin{knitrout}
\definecolor{shadecolor}{rgb}{0.969, 0.969, 0.969}\color{fgcolor}

{\centering \includegraphics[width=\maxwidth]{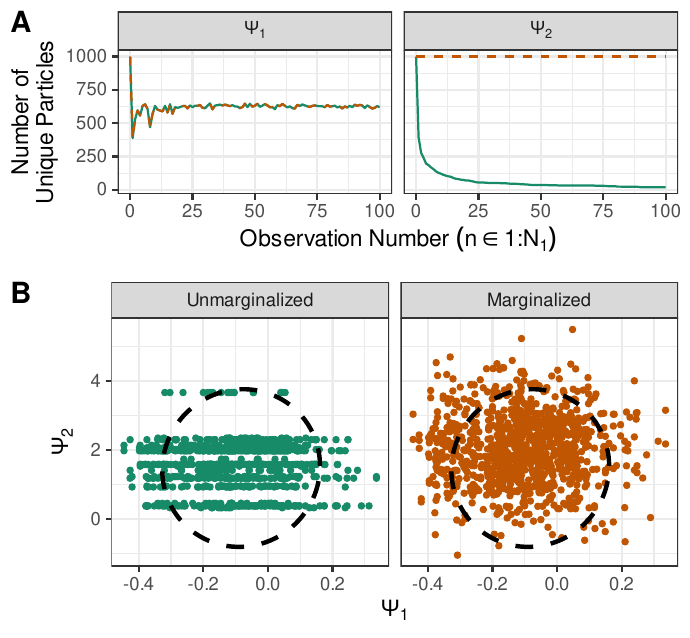} 

}

\end{knitrout}
\caption{\label{fig:depletion}Updating parameter distributions with a single $u = 1$ iteration of both versions of Algorithm~\ref{alg:mpif}. (A) The total number of unique particles representing each parameter. The dashed horizontal line shows that MPIF maintains the number of particles for $\Psi_2$ over time. (B) Parameter particle swarm of a single update with and without marginalization compared to the true posterior distribution.}
\end{figure}

\subsection{Stochastic Gompertz Population Model}\label{sec:gompertz}

We demonstrate the efficacy of the MPIF algorithm by estimating the model parameters of a high-dimensional, nonlinear PanelPOMP model by fitting data simulated from a collection of stochastic Gompertz population models.
This model is commonly used to describe population dynamics in Ecology and has been used as a benchmark for comparison between various algorithms in previous studies \citep{breto20}.
The model assumes a latent state vector $X_{u, n}$ for each unit $u \in 1:U$ and $n \in 0:N$.
For each unit $u$, the latent state has a one-step transition density that satisfies $X_{u,n+1} = K_u^{1-e^{r_u}} \, X_{u, n}^{e^{-r_u}}\, \epsilon_{u, n}$, where $K_u$ is the carrying capacity for the population in unit $u$, $r_u$ is a positive parameter that corresponds to the growth rate, and $\epsilon_{u, n}$ are $\iid$ log-normal random variables such that $\log \epsilon_{u, n} \overset{\iid}{\sim} \normal[0, \sigma_u^2]$.

Measurements of the population are obtained via the density $\log Y_{u, n} \overset{\iid}{\sim} \normal \left[\log X_{u, n}, \, \tau_u^2\right]$
where $\tau_u$ is a positive variance parameter.
This model is a convenient nonlinear non-Gaussian PanelPOMP model because a logarithmic transformation makes the model a linear Gaussian process, and as such the exact likelihood of the model can be calculated by the Kalman filter \citep{kalman60}.

For this simulation study, we generate data from several Gompertz population models, with equal number of observations $N$ in unit $u$, with values of $N \in \{20, 50, 100\}$ and values of $U$ ranging from $U=5$ to $U = 2500$.
To generate data from this model, we fix $K_u = 1$ and $X_{u, 0} = 1$ for all $u$ and treat these parameters as known constants.
We then set $\sigma_u^2 = 0.01$ and $r_u = 0.1$ for all values of $u$ to generate data, but treat these parameters as unknown shared parameters that need to be estimated from the data.
Finally, we set $\tau_u^2 = 0.01$ for all $u$ and treat these parameters as unknown unit-specific parameters.
These values were chosen to obtain comparable simulations and results as \citet{breto20}.

The models were fit using the MPIF algorithm with the number of iterations $M = 50$, and the number of particles $J = 1000$.
For this analysis, intermediate parameter estimates are obtained every five iterations of the MPIF algorithm, and the likelihood of the intermediate parameter values are obtained.
The goal of calculating the intermediate likelihood values is to demonstrate how many iterations are needed to obtain model convergence, and to compare the algorithms performance against that of the PIF algorithm for each time step.
Because the maximization procedure is inherently stochastic, it is recommended to try multiple starting parameter values.
Therefore for each model, 50 unique starting points are used; these starting points are randomly sampled from the hypercube with lower-bounds for each parameter determined by the generating parameter value divided by two, and the upper bound for the parameter defined as the generating parameter value multiplied by two.

The results of this simulation study with $N = 50$ are shown in Figure~\ref{fig:gomp}.
For every combination of $\{U, N\}$ considered, and for all numbers of MIF iterations, the maximum likelihood obtained using MPIF was higher than the maximum obtained using PIF. 
For large $U$, we also found that the worst performing Monte Carlo replicate of the MPIF algorithm often did better than the best performing replication of the PIF algorithm.
In the next section, we find similar results using a more complicated model to describe data that have previously been used as a means of comparing inference procedures.

\begin{figure}[!ht]
\begin{knitrout}
\definecolor{shadecolor}{rgb}{0.969, 0.969, 0.969}\color{fgcolor}

{\centering \includegraphics[width=\maxwidth]{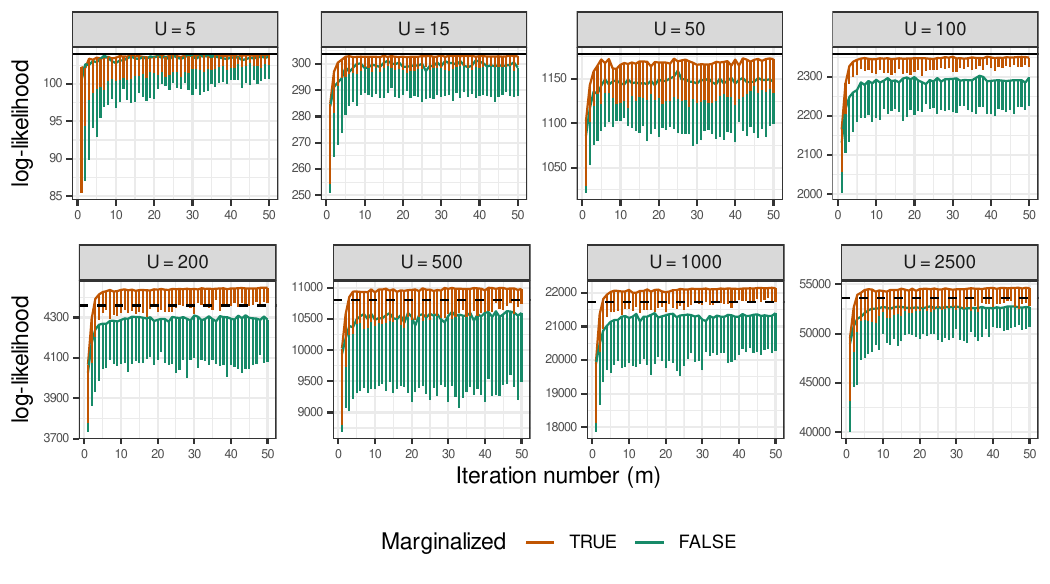} 

}

\end{knitrout}
\caption{\label{fig:gomp}Comparison of the MPIF and PIF algorithms for fitting the stochastic Gompertz population model. 
The solid horizontal line shows the true maximum likelihood, determined via the Kalman filter and a numeric optimizer, an intractable approach for high-dimensional parameter spaces $(U > 100)$; in these cases, the dashed line indicates the likelihood at the data-generating parameters. 
Each algorithm used 50 unique starting points.
Vertical bars span the tenth percentile to the maximum likelihood values}
\end{figure}

\section{Measles in the United Kingdom} 

We show how MPIF and PIF compare in an epidemiological model for weekly reported measles cases for 20 different UK cities from 1950 to 1964 \citep{korevaar20}.
Pre-vaccination UK measles data has been used extensively to motivate innovative methods for inference on stochastic processes since \citet{bartlett60}, yet, fitting nonlinear Markov process models simultaneously to multiple cities with shared and unit-specific parameters remains a challenge, leading practitioners to consider linearizations that have uncertain consequences \citep{korevaar20}.
We fit three different models to the data, all based on the susceptible-exposed-infectious-recovered (SEIR) model of \citet{he10}. 
The state process, $X_u(t)=\big(S^{(u)}_t,E^{(u)}_t, I^{(u)}_t, R^{(u)}_t\big)$, tracks the number of susceptible, exposed, infected and recovered individuals in each unit $u$. 
The total population size, $\pop^{(u)}(t)$, is treated as known, interpolated from census data using cubic splines, and we use this constraint to avoid explicit specification of $R^{(u)}_t$.
State transitions are generated using an Euler approximation to a continuous-time Markov chain, with time step $\eulerstep=1$ day, as follows:
\begin{align*}
A^{(u)}_{BS, t} &\sim \text{Pois}\big(\mu^{(u)}_{BS}(t)\, \eulerstep\big)
\\
(A^{(u)}_{SE, t},A^{(u)}_{SD, t}) &\sim \text{Eulermultinom}\Big(S^{(u)}_t, \bar{\mu}^{(u)}_{SE}(t)\big(\Gamma^{(u)}(t + \eulerstep) - \Gamma(t)\big)/\eulerstep, \mu^{(u)}_{SD}, \eulerstep \Big)
\\
(A^{(u)}_{EI, t},A^{(u)}_{ED, t}) &\sim \text{Eulermultinom}\big(E^{(u)}_t, \mu^{(u)}_{EI}(t), \mu^{(u)}_{ED},\eulerstep\big)
\\
(A^{(u)}_{IR, t},A^{(u)}_{ID, t}) &\sim \text{Eulermultinom}\big(I^{(u)}_t, \mu^{(u)}_{IR}(t), \mu^{(u)}_{ID}, \eulerstep\big),
\\
S^{(u)}_{t+\eulerstep} &= S^{(u)}_t + A^{(u)}_{BS, t} - A^{(u)}_{SE, t} - A^{(u)}_{SD, t}, 
\\
E^{(u)}_{t+\eulerstep} &= E^{(u)}_t + A^{(u)}_{SE, t} - A^{(u)}_{EI, t} - A^{(u)}_{ED, t}, 
\\
I^{(u)}_{t+\eulerstep} &= I^{(u)}_t + A^{(u)}_{EI, t} - A^{(u)}_{IR, t} - A^{(u)}_{ID, t}. 
\end{align*}
Here, $A^{(u)}_{BC,t}$ counts transitions from compartment $B$ into $C$ for unit $u$ between time $t$ and $t+\eulerstep$,  $\text{Pois}(\lambda)$ is a Poisson distribution with mean $\lambda$, $\Gamma^{(u)}(t)$ is a gamma process with mean $t$ and intensity $\sigma^{(u)}_{SE}$ \citep{breto09}, and $\text{Eulermultinom}(n, \mu_1,\mu_2,\eulerstep)$ is a multinomial distribution with $n$ independent trials and event probabilities given by $p_0 = \exp\big\{ -(\mu_1 + \mu_2)\eulerstep \big\}$ and $p_i = \frac{\mu_i}{\mu_1+\mu_2}\big(1-p_0\big)$.
The Eulermultinom outcome is the number of events of type $p_1$ and $p_2$, which correspond to transitions out of the source compartment.
The remaining events, of type $p_0$, correspond to individuals remaining in the source compartment.
The rate of arrivals into the susceptible class, $\mu^{(u)}_{BS}(t)$, is given by
\begin{align*}
	\mu^{(u)}_{BS}(t) = &\big( 1-c^{(u)} \big) \, b^{(u)}(t- \tau_d) +
	  c^{(u)} \, \delta\big( (t-\tau_c) \text{ mod } \tau_y\big)
	    \int^t_{t-\tau_y}b^{(u)}(t- \tau_d -s)\, ds,
\end{align*}
where $\delta$ is the Dirac delta function, $b^{(u)}(t)$ is the births per year at time $t$ interpolated from annual data using cubic splines, $\tau_d$ is the delay between when the births take place and when they actually contribute to the transition rate, $\tau_y = 1$ year, and $\tau_c$ is the school admission day, i.e., the 251st day of the year. 
We set $\tau_d = 4$ years to describe the age at which individuals typically enter a high-transmission environment.

The rate at which individuals enter state $E$ is $\mu^{(u)}_{SE}(t) = \bar{\mu}^{(u)}_{SE}(t) \, \frac{d\Gamma^{(u)}(t)}{dt}$, where $\bar{\mu}^{(u)}_{SE}(t)$ is the mean rate given by
$
  \bar{\mu}^{(u)}_{SE}(t) =
    \frac{
      \beta^{(u)}(t)\, \big(I^{(u)}_t+\iota^{(u)}\big)
    }{
      \pop^{(u)}(t)
    },
$
where $\pop^{(u)}(t)$ is the city population at time $t$ interpolated from annual data, and $\iota^{(u)}$ is the average number of infected individuals visiting the city at any time. 
The force of infection, $\beta^{(u)}(t)$, is given by
\begin{align*}
	&\beta^{(u)}(t) = \begin{cases}
		\beta^{(u)}_0 \big( 1+a^{(u)}(1-p)/p \big) & \text{during school term}\\
		\beta^{(u)}_0 \big( 1-a^{(u)} \big) & \text{during vacation}\\
	\end{cases}\\
  &\beta^{(u)}_0 = \mathcal{R}^{(u)}_0\big( 1-\exp\big\{ -(\mu^{(u)}_{IR}(t)+\mu^{(u)}_{ID})\Delta\big\}\big)/\Delta.
\end{align*}
where $p = 0.7589$ is the proportion of the year occupied by the school term.
The remaining transition rates are assumed to be constant: $\mu^{(u)}_{EI}(t) = \sigma^{(u)}$, $\mu^{(u)}_{IR}(t) = \gamma^{(u)}$, and $\mu^{(u)}_{SD} = \mu^{(u)}_{ED} = \mu^{(u)}_{ID} = 0.02 \, \text{yr}^{-1}$.
We follow \citet{he10} by using a discretized normal distribution for the number of cases reported: 
\begin{align*}
 P\big(Y^{(u)}_n = y^{(u)} \big| Z^{(u)}_n = z^{(u)} \big) =
   \Phi\big(y^{(u)} + 0.5, \rho^{(u)} z^{(u)}, \rho^{(u)}(1-\rho^{(u)})z^{(u)} +
     [\psi^{(u)} \, \rho^{(u)} \, z^{(u)}]^2 \big)\\
   \qquad - \Phi\big( y^{(u)} - 0.5, \rho^{(u)} z^{(u)}, \rho^{(u)}(1-\rho^{(u)})z^{(u)} +
     [\psi^{(u)}\, \rho^{(u)}\, z^{(u)}]^2 \big)
\end{align*}
where $\Phi(\cdot;\mu, \sigma^2)$ is the CDF for a $\normal[\mu,\sigma^2]$ random variable and $Z_n$ is the number of people transitioning from compartment $I$ to $R$ between the $(n-1)th$ and $n$th observation times. 
Lastly, we estimate the proportion of individuals in the first three states at time $t_0$, $S_0^{(u)}$, $E_0^{(u)}$, and $I_0^{(u)}$. 
Given that these proportions along with $R_0^{(u)}$ must add up to 1, there is no need to actually estimate $R_0^{(u)}$.
Consequently, the total number of parameters per unit is 12.

We use three different models based on the \cite{he10} model, with the key difference of investigating subsets of the parameters that might be well modeled as shared.
\begin{enumerate}
  \item The ``$c$-shared" model uses a shared parameter for $c$.
  \item The ``$\iota$-shared" model uses a log-log linear model between $\iota$ and the city population for year 1950, specifically $\log\big(\iota^{(u)}\big) = \iota_1 + \iota_2 \cdot \log\big(\pop^{(u)}(1950)\big)$.
  \item The ``7-shared" model uses the log-log linear model for $\iota$ and shared parameters for $c$, $\mathcal{R}_0$, $\gamma$, $\sigma$, $\sigma_{SE}$, and $a$.
\end{enumerate}

For each model, we run MPIF and PIF for 200 iterations, each search starting from one of 36 different parameter vectors randomly sampled from a hypercube where each dimension is slightly larger than the range spanned by the corresponding unit-specific MLE's in \cite{he10}. 
We perform these searches using 500, 5000, and 10000 particles to discern how the fits yielded by MPIF and PIF differ based on the selected particle count. 
For the present data set, 500 particles is too low for proper optimization, 5000 is adequate, and 10000 enables a thorough parameter search. 
The log-likelihood for each fit is evaluated using the average of replicated particle filter evaluations with 10000 particles at evenly-spaced iterations: 20, 56, 92, 128, 164, and 200. 
Figure~\ref{fig:measles1} summarizes the output of this Monte Carlo optimization search. 
Similar to the Gompertz population model, the MPIF algorithm consistently yields parameter estimates corresponding to higher likelihoods than the PIF algorithm.

In practice, because iterated filtering algorithms are stochastic optimization algorithms, many Monte Carlo replicates are conducted from various initialization points, and final parameter estimates correspond to the search with the highest likelihood. 
Because of this, we are primarily interested in how in the maximum estimate from each algorithm compares. 
In Figure~\ref{fig:measles1}, we note that MPIF consistently yields larger sample maximums of the log-likelihood evaluations irrespective of the model or particle count we use. 
By 200 iterations, we see that maximum log-likelihoods obtained using MPIF are 70 to 105 units higher than PIF for the $\iota$-shared and $c$-shared models, or about 4 to 5 units per city. 
MPIF has about half the advantage for the 7-shared model by 200 iterations. 
Close inspection of the results show that for nearly all combinations of number of particles, iterations, and model specification, MPIF outperforms the PIF algorithm.
In cases where this is not true, the distribution for PIF-generated log-likelihoods have especially long right-tails, suggesting that the observed advantage for PIF in these scenarios is a result of large variance working in its favor.
Increasing the number of particles and iterations reduces the variance among Monte Carlo replications, and when the number of particles ($J$) is largest, the advantage of MPIF over PIF becomes more evident for all models variations.

The boxplots in Figure~\ref{fig:measles1} demonstrate that the standard deviation of the log-likelihood across Monte Carlo replicates is consistently lower for MPIF. 
In addition to suggesting that even poor performing Monte Carlo replicates of the MPIF algorithm have better outcomes than the best replicates for the PIF algorithm, low standard deviation across initializations is useful in practice as the convergence of iterated filtering algorithms is often judged by whether or not distinct Monte Carlo replicates finish at the same maximum, within suitable Monte Carlo error. 
Reducing the variance between Monte Carlo replicates also results in tighter confidence intervals when computing Monte Carlo adjusted profile confidence intervals \citep{ionides17}.

\begin{figure}[ht]
\begin{knitrout}
\definecolor{shadecolor}{rgb}{0.969, 0.969, 0.969}\color{fgcolor}

{\centering \includegraphics[width=\maxwidth]{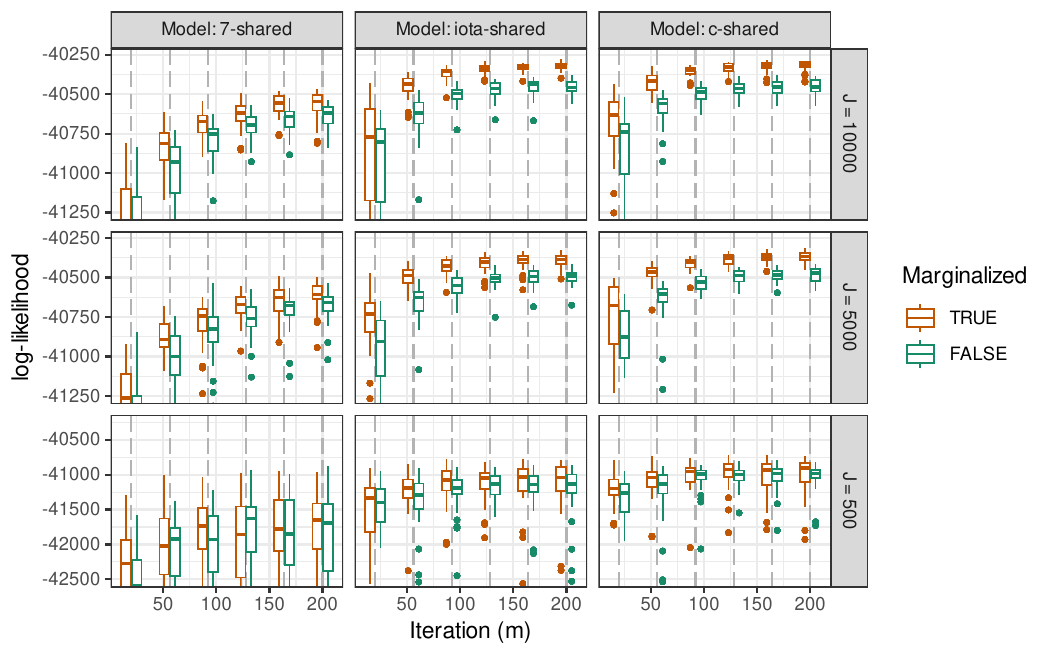} 

}

\end{knitrout}
\caption{\label{fig:measles1} Log-likelihoods yielded by PIF and MPIF for the mechanistic measles model to the UK data. 
Rows correspond to a different number of particles $J$ used in Algorithm~\ref{alg:mpif}. 
The log-likelihood is evaluated at iterations 20, 56, 92, 128, 164, and 200.}
\end{figure}

\section{Discussion}\label{sec:discussion}

The issue of particle depletion when estimating the posterior distribution of parameter values using a particle filter has previously been noted \citep[e.g.,][Section~1.2]{chen25}. 
A key innovation of iterated filtering algorithms is that adding parameter perturbations helps revive the particle representations from a depleted state by adding artificial noise to the parameter particles. 
These perturbations are precisely why the number of unique particles representing the distribution of $\Psi_1$ in Panel~A of Figure~\ref{fig:depletion} does not degenerate to zero over time.
Given this observation, an alternative approach to solving the particle depletion issue that arises in higher dimensions would be to perturb all model parameters at each time-step. 
This approach is supported by Theorem~\ref{theorem:pif}, and avoids the analytic challenges associated with adding the marginalization step.
However, perturbing all model parameters at each step is equivalent to applying a vanilla iterated filtering algorithm to a PanelPOMP model, which generally performs worse on panel models than the PIF variation \citep{breto20}. 

\citet{Liu01} note that the artificial noise introduced from parameter perturbations results in loss of information.
This loss of information motivates the common practice in applications of iterated filtering of only perturbing model parameters when the data used to calculate the particle weights has direct relation with the parameters that are being perturbed.
For instance, it is common practice to only perturb parameters that are unique to the initialization density $f_{X_{u, 0}}(x_{u, 0};\, \theta)$ at the first available observations, as these observations have the strongest signal for the initialization parameters. 
If initial parameters are perturbed at all observation times, then the signal from these initial observations gets lost over time. 
This same principle serves as a motivator for the PIF algorithm and describes why PIF is more successful than vanilla iterated filtering algorithms applied to PanelPOMPs: perturbing unit-specific parameters while considering data from other units results in a significant loss of information.

The challenge of iterating filtering for PanelPOMP models can be described as a tradeoff between particle depletion and a loss of signal due to parameter perturbations. 
The MPIF algorithm introduced here is designed to address both of these challenges simultaneously by avoiding perturbations when the signal is weak, but not resampling unit-specific parameters using weights calculated with data from other units. 
The cost of this modification is a small amount of bias for the particle representation of the posterior distribution at each step. 
Theorem~\ref{theorem:GG} formally demonstrates that the affect of the bias can be completely negated if the log-likelihood is quadratic, which is the limiting behavior of all likelihood surfaces. 

Finally, Theorem~\ref{theorem:pif} does provide some stronger guarantees for MPIF under additional constraints on the PanelPOMP model that we have not yet mentioned. 
In a model with only unit-specific parameters ($\Phi = \emptyset$), for instance, MPIF is equivalent to conducting IF2 independently on each unit, and therefore the convergence results of Theorem~\ref{theorem:pif} apply. 
Similarly, if the model only contains shared parameters ($\Psi_{1:U} = \emptyset$), then MPIF is equivalent to PIF, and once again stronger theoretical guarantees are available. 
This equivalency also provides some intuition as to when MPIF will outperform its alternatives. 
In a model consisting primarily of shared parameters, MPIF behaves very similarly to PIF and adds a smaller advantage relative to the case when there are more unit-specific parameters. 
We have seen this pattern in our results, as the gain in log-likelihood obtained via MPIF was smallest for the measles model with only one unit specific parameter.

\vspace{-3mm}

\bibliographystyle{jasa3}

\bibliography{references}

\newpage

\begin{appendices}

\section{Assumptions for Theorem~\ref{alg:mpif}}\label{sec:assumptions}

For all constants $(\epsilon, r) \in (0, \infty) \times \R^d$, we define $B^d_\epsilon(r) = \{r' \in \R^d: \vert r - r'\vert_2 < \epsilon\}$. 
Let $\BorelT$ be the Borel $\sigma$-algebra on the set of real numbers $\R^{\nshared+\Unit\nspecific}$.
We assume that $\Theta \in \BorelT$ is a compact set that satisfies \ref{assumption:regular}. 
Informally, this ensures that the corners of the set are not too sharp, and directly follows definition of a regular compact set from \citet{chen25}.
\begin{enumerate}[label=(A\arabic*), ref=(A\arabic*)]
  \item \label{assumption:regular} There exists a continuous function $\regFun\colon [0, \infty) \to [0, \infty)$ such that $\lim_{x\downarrow 0} \regFun(x) = 0$, and for all $(\epsilon, x) \in (0,\infty) \times \Theta$, there exists $x' \in \Theta$ such that
$$
B^{\nshared+\Unit\nspecific}_{\regFun(\epsilon)}(x') \subseteq B^{\nshared+\Unit\nspecific}_{\epsilon}(x) \cap \Theta.
$$
\end{enumerate}
We make the following assumptions on the probability densities that are used to define a PanelPOMP model described in Section~\ref{sec:ppomp}.
\begin{enumerate}[label=(B\arabic*), ref=(B\arabic*)]
\item There exists a compact set $E \subset X$ such that 
$$
\inf_{(\theta, x) \in \Theta \times X} \int_E f_{X_n|X_{n-1}}(x_n | x_{n-1}; \theta) dx_n > 0,
$$
for all $n \in \{1, 2, \ldots, N\}$. \label{assumption:mle1}
\item $L(\theta; \bm{y}^*) > 0$ for all $\theta \in \Theta$ and $\sup_{(\theta, x_{u, n})\in (\Theta, \mathcal{X})}f_{Y_{u, n}|X_{u, n}}(y^*_{u, n}|x_{u,n}; \, \theta) < \infty$ for all $u \in \seq{1}{U}$ and $n \in \seq{1}{N_u}$. \label{assumption:mle2}
    \item The transition and measurement densities are sufficiently smooth functions of $\theta$, in the sense that for any $\theta, \theta' \in \Theta$, there exists a a continuous and strictly increasing function $g: [0, \infty) \rightarrow [0, \infty)$ and sequence of measurable functions $\varphi_{u, 0:N_u}: \mathcal{X}^2 \rightarrow \mathbb{R}$, that satisfy:
    \begin{align*}
    \begin{split}
    |&\log\big(f_{Y_{u, n}|X_{u, n}}(y^*_{u, n}|x_{u, n};\, \theta)f_{X_{u, n}|X_{u, n-1}}(x_{u, n}|x_{u, n-1};\, \theta)\big) \\
    & - \log\big(f_{Y_{u, n}|X_{u, n}}(y^*_{u, n}|x_{u, n};\, \theta')f_{X_{u, n}|X_{u, n-1}}(x_{u, n}|x_{u, n-1};\, \theta')\big)| \leq g\big(\|\theta - \theta'\|\big)\varphi_{u, n}(x_{u, n-1}, x_{u, n}),
    \end{split}
    \end{align*}
    With the constraint that for all $u\in \seq{1}{U}$, there exists a $\delta_u \in (0, \infty)$ such that
    \begin{align*}
    &  \int \exp\Big\{\delta_u \sum_{n = 0}^{N_u}\varphi_{u, n}(x_{u, n-1}, x_{u, n})\Big\} f_{X_{u, 0}}(x_{u, 0};\, \theta) \times
    \\
    & \hspace{20mm} \prod_{n = 1}^{N_u}f_{Y_{u, n}|X_{u, n}}(y^*_{u, n}|x_{u, n};\, \theta)f_{X_{u, n}|X_{u, n-1}}(x_{u, n}|x_{u, n-1};\, \theta)\, dx_{u, 0:N_u} < \infty,
    \end{align*}
    using the convention that if $n=0$, then
    \begin{eqnarray*}
      \varphi_{u, n}(x_{u, n-1}, x_{u, n}) &=& \varphi_{u, 0}(x_{u, 0}, x_{u, 0})
      \\
      f_{Y_{u, n}|X_{u, n}}(\cdot | x_{u, n}; \, \theta) &=& 1
      \\
      f_{X_{u, n} | X_{u, n-1}}(x_{u, n}|x_{u, n-1};\, \theta) &=& f_{X_{u, 0}}(x_{u, 0}; \, \theta)
    \end{eqnarray*}\label{assumption:mle3}
\end{enumerate}
\vspace{-13mm}
Finally, the following assumptions are made about the random perturbations in lines~\ref{line:startu} and \ref{line:perturbations} of Algorithm~\ref{alg:mpif}.
Let $\mu_0$ denote the probability measure on $\big(\R^{\nshared + \Unit\nspecific}, \BorelT\big)$ that defines the distribution of the initial particle swarm, i.e., $\Theta_{1:J}^0 \overset{\iid}{\sim} \mu_0$. 
If $\{\mu_n\}_{n \geq 1}$ is a sequence of random probability measures \citep{crauel02} on $\big(\R^{\nshared + \Unit\nspecific}, \BorelT\big)$ with $\mathcal{F}_n$ denoting the corresponding filtration, then we denote $K_{\mu_n}$ to be the Markov kernel such that \linebreak $\theta \sim K_{\mu_n}(\theta', d\theta) \iff \theta \overset{dist}{=}\theta' + \addRV,$ where $\addRV|\mathcal{F}_n \sim \mu_n$.

Let $S_{\tilde{u}} = \sum_{k = 1}^{\tilde{u}} (N_{k} + 1)$. 
We define the Markov kernel as a sequence in $\nclone \in \mathbb{N}$, where $\nclone$ defines the values $(m, u, n)$ via the equation $\nclone = (\nmif-1)S_\Unit + S_{\unit-1}+n+1$, such that $K_{\nclone}(\theta_{\nclone-1}, d\theta_{\nclone}) = h_{u, n}(\theta_{\nclone}|\theta_{\nclone-1};\sigma_{u, \nmif})d\theta_{\nclone}$.
Let $\{U_{\nclone}\}_{\nclone \geq 1}$ be a sequence of $\Theta$-valued random variables such that for all $\nclone \geq 1$ and sets $A_1, \ldots, A_{\nclone} \in \mathcal{B}(\Theta)$, where $\mathcal{B}(\Theta)$ is the Borel $\sigma$-algebra on $\Theta$. 
    $$
    \prob\big(\addRV_k \in A_k, k\in \{1, \ldots, \nclone\} | \mathcal{F}_{\nclone}\big) = \prod_{k = 1}^{\nclone} \mu_k(A_k). 
    $$
Then we assume the sequence of probability measures $\{\mu_{\nclone}\}_{\nclone \geq 0}$ satisfy:
    \begin{enumerate}[label=(C\arabic*), ref=(C\arabic*)]
      \item $\mu_0\big(B_{\epsilon}(\theta)\big) > 0$ for all $\theta \in \Theta$ and $\epsilon \in (0, \infty)$.\label{assumption:kernel1}
      \item There exists a family of $(0, 1]$-valued random variables $(\Gamma^\mu_\delta)_{\delta \in (0, \infty)}$ such that, for all $\delta \in (0, \infty)$, we have $\prob\big(\inf_{\nclone \geq 1}\mu_{\nclone}\big(B_\delta(0)\big) \geq \Gamma_{\delta}^\mu\big) = 1$.\label{assumption:kernel2}
      \item $\prob\big(\inf_{\nclone \geq 1}\inf_{\theta' \in \Theta}\int_\Theta K_{\mu_{\nclone}}(\theta', d\theta) \geq \Gamma^\mu\big) = 1$ for some $(0, 1]$-valued random variable $\Gamma^\mu$.\label{assumption:kernel3}
      \item \label{assumption:kernel4} There exists a sequence of natural numbers $\{k_{\nclone}\}_{\nclone \geq 1}$ and, for all $l \in \mathbb{N}_0$, a sequence of $(0, \infty]$ valued functions $\{g_{\nclone, l}\}_{\nclone \geq 1}$ defined on $(0, \infty)$ such that \begin{enumerate} 
  \item $\lim_{\nclone \rightarrow \infty} k_{\nclone} / {\nclone} = \lim_{\nclone \rightarrow \infty} 1/k_{\nclone} = 0$, and for all $\epsilon \in (0, \infty)$, $\lim_{\nclone \rightarrow \infty}g_{\nclone, l}(\epsilon) = 0$. 
  \item For all $\nclone \geq 1$ such that $\nclone > 2k_{\nclone}$, $k^*_{\nclone} \in \{k_{\nclone}, \ldots, 2k_{\nclone}\}$, and for all $\epsilon \in (0, \infty)$,
  $$
  \frac{1}{\nclone - k^*_{\nclone}}\log \prob \bigg(\exists s \in \{k^*_{\nclone} + 1, \ldots, \nclone\}: \sum_{i = k^*_{\nclone} + 1}^s \addRV_i \notin B_{\epsilon}(0) | \mathcal{F}_{\nclone}\bigg) \leq -\frac{1}{g_{\nclone, l}(\epsilon)}.
  $$
  \item For all $\nclone > l$ and $\epsilon \in (0, \infty)$, we have 
  $$
  \frac{1}{\nclone-l}\log \prob \bigg(\sum_{i = l + 1}^{s} \addRV_i \in B_{\epsilon}(0), \forall s \in \{l + 1, \ldots, \nclone\}|\mathcal{F}_{\nclone}\bigg) \geq -g_{\nclone, l}(\epsilon).
  $$
  \end{enumerate}
\end{enumerate}

  \section{Proof and discussion of Theorem~\ref{theorem:pif}}\label{sec:panelTheory}

  Theorem~\ref{theorem:pif} is a straightforward extension of Theorem~4 of \citet{chen25} to PanelPOMP models. 
  Our approach is to express an arbitrary PanelPOMP model as a POMP model using a long format, where the latent and observed processes are stacked one unit after another to describe a single long time series that arises by stacking unit time series data one after another.
  Then, Algorithm~\ref{alg:mpif} is equivalent to applying the theory developed in the appendix of \citet{chen25} to this long POMP model.
  For instance, if the perturbation schedule in Algorithm~\ref{alg:mpif} is chosen to to follow the dynamic approach introduced by the authors, then Theorem~1 is just an application of Theorem~4 of \citet{chen25} to a panel version of the model. 
  
  The stacking argument to prove Theorem~\ref{theorem:pif} follows the approach of \citet{breto20}, who previously introduced the PIF algorithm and extended an existing theory for low-dimensional POMP models \citep{ionides15} to derive theoretical properties of PIF. 
  However, the recent work of \citet{chen25} provides convergence results for iterated filtering algorithms under weaker assumptions than \citet{ionides15}.
  Most notably, \citet{chen25} prove convergence of iterated filtering algorithms as the random walk standard deviation for parameter perturbations decreases over time rather than being fixed at a small constant.
  Thus, this approach allows us to provide stronger theoretical results for panel iterated filtering than obtained by \citet{breto20}.
  
  \begin{proof} 
  Recall the basic definition of the joint density of a PanelPOMP model: 
  \begin{align}
&f_{X_{0:N}, Y_{1:N}}(x_{0:N}, y_{1:N}; \, \theta) =
\nonumber
\\
& \hspace{20mm} \prod_{u = 1}^Uf_{X_{u, 0}}(x_{u, 0};\, \theta) 
\prod_{n = 1}^{N_u}f_{Y_{u, n}|X_{u, n}}(y_{u, n} | x_{u, n} ;\, \theta)f_{X_{u, n} | X_{u, n-1}}(x_{u, n}|x_{u, n-1}; \, \theta). \label{eq:ppompSI}
\end{align}
  We would like to pivot the unit specific processes into a single long process, avoiding the need to loop over the unit $u$.
  We define $S_{\tilde{u}} = \sum_{k = 1}^{\tilde{u}}(N_k + 1)$ to be the sum total number of time-steps (observations + initialization) for units $1$ up to unit $\tilde{u}$, defining $S_0 = 0$.
  We use the sequence $\nstack \in \mathbb{N}$ to map time points and states indexed with $(u, n) \in \seq{1}{U}\times\seq{0}{N_u}$ to a new model with only a single index $\nstack \in \seq{1}{S_U}$, defined by the equation $\nstack = S_{\unit-1} + n + 1$
  
  Now let $T_{\tilde{u}} = \sum_{k = 1}^{\tilde{u}} (t_{k, N_k} - t_{k, 0})$ for $\tilde{u} \in \seq{1}{U}$, with $T_0 = 0$.
  We now define new latent processes $\tilde{X}$ and $\tilde{Y}$. As the original model allows for continuous time latent process, we write 
  First,
  \begin{align}
    \tilde{X}(t) &= X_u(t_{u, 0} + (t - T_u)), \, \text{for} \,\, T_{u-1} \leq t \leq T_u.
  \end{align}
  Letting $\tau_{\nstack} = t_{u, n}$, the latent and observable processes are equivalent to
  \begin{align*}
    \tilde{X}_{\nstack} = \tilde{X}(\tau_{\nstack}) = X_{u, n}, \quad \text{and} \quad \tilde{Y}_{\nstack} = Y_{u, n}. 
  \end{align*}
  With this new definition of $\tilde{X}$ and $\tilde{Y}$, we have a new POMP model which has joint density
  \begin{align}
&f_{\tilde{X}_{1:S_U}, \tilde{Y}_{1:S_U}}(x_{1:S_U}, y_{1:S_U}; \, \theta) =  \nonumber
\\
& \hspace{20mm}
f_{\tilde{X}_0}(x_0;\, \theta)\prod_{\nstack = 1}^{S_{U}}f_{\tilde{Y}_{\nstack}|\tilde{X}_\nstack}(y_{\nstack} | x_{\nstack}; \, \theta)f_{\tilde{X}_{\nstack}|\tilde{X}_{\nstack - 1}}(x_{\nstack} | x_{\nstack - 1}; \, \theta), \label{eq:lowPOMP}
  \end{align}
  Using the convention that $n' \mapsto (u,0)$ for any $u$, then $f_{\tilde{Y}_{\nstack}|\tilde{X}_\nstack}(y_{\nstack} | x_{\nstack}; \, \theta) = 1$ and \linebreak $f_{\tilde{X}_{\nstack}|\tilde{X}_{\nstack - 1}}(x_{\nstack} | x_{\nstack - 1}; \, \theta) = f_{X_{u, 0}}(x_{u, 0}; \, \theta)$.

  As defined, Eqs.~\ref{eq:lowPOMP} and \ref{eq:ppompSI} describe the same model, but \ref{eq:lowPOMP} has been written to match the SSM of \citet{chen25}, after adjusting for the choice of initializing at $\tilde{X}_1$ rather than $\tilde{X}_0$. 
  We refer to Eq~\ref{eq:lowPOMP} as the \emph{long} format, which is indicative that the model has been expressed as a low-dimensional POMP model with observation times ranging from $1$ to $S_U$, rather than a collection (or product) of POMP models, each with observation times $N_u+1$ for $u \in \seq{1}{U}$. 
  
  This representation allows us to naturally extend the theoretical framework of \citet{chen25} to PanelPOMP models.
  Specifically, Assumption~\ref{assumption:regular} imply that the set $\Theta$ is a regular compact set \citep[See Definition~1 of][]{chen25}; Assumptions~\ref{assumption:mle2}--\ref{assumption:mle3} ensure that the density in Eq.~\ref{eq:lowPOMP} corresponds to a state-space model that satisfies the MLE conditions of \citet{chen25}, and Assumption~\ref{assumption:mle1} is used as a uniformity condition across particle representations. 
  Finally, Assumptions~\ref{assumption:kernel1}--\ref{assumption:kernel4} are applied to the cloned version of Model~\ref{eq:lowPOMP}. 
  The process is cloned such that we have a new SSM $\dbtilde{Y}_{\nclone} = Y_{u, n}$, $\dbtilde{X}_{\nclone} = X_{u, n}$, using the mapping $\nclone = (\nmif-1)S_\Unit + S_{\unit-1}+n+1$.
  Assumptions~\ref{assumption:kernel1}--\ref{assumption:kernel4} therefore imply assumptions C1-C3 and C4' of \citet{chen25} for the perturbation kernels of the self organized SSM defined via $\big(\dbtilde{Y}_{\nclone}, \dbtilde{X}_{\nclone}\big)$.
  An alternative version of Assumption~\ref{assumption:kernel4} is also stated in Appendix~A.5 of \citet{chen25}, but here we only present one version for brevity.
  Together, the conditions stated in Appendix~\ref{sec:assumptions} allow for a direct application of of the theory developed by \citet[][Appendix~A]{chen25}, with precise details of applying this theorem to the cloned model given in \citet[Supplement~S7,][]{chen25}.
    \end{proof}

  \noindent Formally, Theorem~\ref{theorem:pif} provides guarantees for several variants of iterated filtering algorithms applied to panel models, where each variant is a change to the perturbation kernel that satisfy the conditions \ref{assumption:kernel1}--\ref{assumption:kernel4}.
  For example, Theorem~4 of \citet{chen25} is stated for specific perturbation kernels, such as those based on a normal distribution---a choice that has been used by default in most iterated filtering applications \citep[e.g.,][]{ionides15,fox22,subramanian21,wheeler24}.
  This theory also applies to the variant of iterated filtering proposed by \citet{chen25} where perturbations are only applied when needed in a dynamic fashion.
  
  Conditions \ref{assumption:kernel1}--\ref{assumption:kernel4} are difficult to verify in practice, and not all variants of the algorithm that satisfy this condition are useful for panel models. 
    For instance, multivariate normal perturbations applied at each time point is equivalent to applying IF2 to panel models; this approach generally leads to worse results than the PIF algorithm applied to the same model \citep{breto20}.
    As pointed out in Section~\ref{sec:discussion}, the effect of the perturbation kernel is twofold. 
    First, it helps revive particle representations of the intermediate parameter distributions by adding random noise. 
    Second, the random noise results in a loss of information by masking the signal from the observed data. 
    These competing interests are not addressed in theorems involving iterated filtering algorithms.
    In this case, heuristics are useful for determining suitable perturbation densities.
    
    The modification of \citet{chen25} that applies perturbations only when needed is an effective approach to address the tradeoff between these competing interests.
    For high-dimensional panel models, however, this modification still leads to a large number of updates of the particle representation of the parameter vector $\psi_{-u}$ using data from unit $u$, which is the primary reason that algorithms like PIF struggle in higher dimensional settings.
    This observation leads to the proposal of the MPIF algorithm, which avoids resampling parameters if little information via the likelihood function.
 Combining the dynamic perturbations strategies of \citet{chen25} with a marginalization step may also result in an improved algorithm in some cases.
 However, we find that the default multivariate normal perturbations with singular covariance matrix to avoid perturbing all parameters at each time step to be sufficient for parameter estimation in practice.
  
One reason that Theorem~\ref{theorem:pif} cannot be used directly to infer the practicality of iterated filtering algorithms for panel data is because the behavior of $C_M$ as $M \rightarrow \infty$ is unknown.
Previous works on high-dimensional particle filtering---without adding perturbations or performing data cloning---suggest that the sequence $C_M$ scales exponentially with the number of units \citep{snyder08,bengtsson08}.
While this problem has partially been avoided by writing the PanelPOMP in a long format, which reduces the size of both the latent and observed spaces at each time point, the parameter space for $\Theta$ remains large.
Specifically, the total dimension of $\Theta$ is $\nshared + U\nspecific$, where $\nshared$ and $\nspecific$ are the number of shared and unit-specific parameters, respectively.

Because particle filters are known to perform poorly in high-dimensions, alternative filtering algorithms should be used.
One such example is called the block particle filter \citep{rebeschini15}; this algorithm breaks the state-space into separate units called \emph{blocks}, and performs the update step in the filtering equation independently on each block.
Block particle filters have been found to be effective in high-dimensional settings where the units can be treated as approximately independent. 
These results also provides an alternative motivation and justification for the MPIF algorithm, and a potential avenue for expanding Theorem~\ref{theorem:GG} to more general state space models.
In the panel model setting, the MPIF algorithm takes a similar approach to the block particle filter by updating the filtering distribution over the independent units, though the blocking occurs in the parameters space $\Theta$ rather than the latent space $\mathcal{X}$.
Thus, the MPIF algorithm has many similarities to other iterated block particle filtering algorithms that have been effective for moderately sized dynamic systems with spatial coupling \citep{ning23,ionides24}.

\section{Proof of Theorem~\ref{theorem:GG}}\label{appendix:Gaus}

As a reminder, we assume that there are $U\geq 2$ units, labeled $\seq{1}{U}$, with the data for unit $u$ being denoted as $\bm{y}^*_u$. 
We write $\theta = (\phi, \psi_1, \ldots, \psi_U)$, and recall that the unit likelihood $L_{u}(\theta;\bm{y}^*_u)$ for unit $u \in 1:U$ depends only on the shared parameter $\phi$ and the unit-specific parameter $\psi_u$.
The panel assumption implies that, conditioned on the parameter vector, the units are dynamically independent.
Therefore, we can decompose the likelihood function for the entire collection of data $L(\theta; \bm{y}^*)$ as:
\begin{align*}
    L(\theta; \bm{y}^*) = \prod_{u = 1}^U L_u(\theta;\bm{y}_u^*) = \prod_{u = 1}^U L_u(\phi, \psi_u;\bm{y}_u^*).
\end{align*}

Each iteration of the Eqs.~\ref{eq:margBayes} and \ref{eq:MPIFupdate} corresponds to a Bayes update followed by a marginalization. 
Under the statement of Theorem~\ref{theorem:GG}, the prior and likelihood are assumed to be Gaussian.
In this setting, it is well known that the resulting Bayes posterior also corresponds to a Gaussian distribution. 
Similarly, the marginalization of a multivariate Gaussian distribution also results in multivariate Gaussian distributions. 
Thus, each iteration of Eqs.~\ref{eq:margBayes} and \ref{eq:MPIFupdate} results in a density that corresponds to a Gaussian distribution. 

We now show that the mean of the resulting Gaussian distribution converges to the MLE, while the covariance matrix converges to the zero matrix, resulting in a density with all mass centered at the MLE. 
To aid this calculation, we introduce the following lemma.
\begin{lemma}\label{lemma:bound}
\label{lemma:matrix}
  Let $d$ be a positive integer, and let $B_k \in \R^{2\times 2}$ for $k \in \seq{1}{d}$ be a collection of real-valued matrices.
  We construct a sequence of matrices $A_{k} \in \R^{d+1\times d+1}$ such that:
  \begin{equation*}
    \spacingset{1.5}
    [A_k]_{i, j} = \begin{cases}
        [B_k]_{1,1}, & i = j = 1 \\ 
        [B_k]_{1,2}, & i = 1, j = k+1 \\
        [B_k]_{2, 1}, & i = k+1, j=1 \\
        [B_k]_{2,2}, & i = j = k+1 \\
        1, & i = j, \, i \notin \{1, k+1\} \\
        0, & \text{otherwise}
	\end{cases}
  \end{equation*}
 That is, $A_k$ is a perturbation of the identity matrix, where the first and $(k+1)$th row and column have been modified on the diagonal and on their off-diagonal intersection to match the matrix $B_k$.
If for all $k \in \seq{1}{d}$, $\|B_k\|_2 \leq c$ for some constant $0 < \lemmaBound \leq 1$, then
\begin{equation}\label{eq:lemma:bound}
\bigg\| \prod_{k = 1}^d A_k \bigg\|_2 \leq \lemmaBound.
\end{equation}
\end{lemma}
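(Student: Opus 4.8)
The plan is to prove the bound by induction on $d$, peeling off the last factor $A_d$ and exploiting the hypothesis $0<c\le 1$ in one crucial step. Throughout, write $M_d=\prod_{k=1}^{d}A_k=A_1A_2\cdots A_d$ and index the coordinates of $\R^{d+1}$ by $\{1,\dots,d+1\}$, so that coordinate $1$ is the ``shared'' coordinate that every $A_k$ touches, while coordinate $k+1$ is touched only by $A_k$. The base case $d=1$ is immediate, since by construction $A_1$ is literally the $2\times2$ matrix $B_1$, whence $\|M_1\|_2=\|B_1\|_2\le c$.

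For the inductive step I would first note that $A_1,\dots,A_{d-1}$ each modify only rows and columns with index in $\{1,\dots,d\}$ and therefore fix the last basis vector $e_{d+1}$; hence $A_1\cdots A_{d-1}$ is block diagonal on $\R^{d+1}=\R^{d}\oplus\R$, acting as $M_{d-1}=A_1\cdots A_{d-1}$ (a product of exactly the form covered by the induction hypothesis, built from $B_1,\dots,B_{d-1}$) on the first $d$ coordinates and as the identity on coordinate $d+1$. Since padding by an identity block does not change the operator norm, $\|A_1\cdots A_{d-1}\|_2=\|M_{d-1}\|_2\le c$. Fixing $x=(x_1,\dots,x_{d+1})\in\R^{d+1}$ and writing $P=[B_d]_{1,1}x_1+[B_d]_{1,2}x_{d+1}$ and $Q=[B_d]_{2,1}x_1+[B_d]_{2,2}x_{d+1}$, the fact that $A_d$ acts as $B_d$ on the coordinate pair $(1,d+1)$ and trivially elsewhere gives $A_dx=(P,x_2,\dots,x_d,Q)$; applying $A_1\cdots A_{d-1}$ then leaves $Q$ untouched and passes $(P,x_2,\dots,x_d)$ through $M_{d-1}$, so
$$
\|M_dx\|_2^2=\big\|M_{d-1}(P,x_2,\dots,x_d)\big\|_2^2+Q^2\le c^2\Big(P^2+\sum_{i=2}^{d}x_i^2\Big)+Q^2 .
$$
It remains to bound $c^2P^2+Q^2$, and this is the one place the constraint $c\le 1$ is used: because $c^2\le1$ we have $c^2P^2+Q^2\le P^2+Q^2$, and $(P,Q)$ is exactly the image of $(x_1,x_{d+1})$ under $B_d$, so $P^2+Q^2\le\|B_d\|_2^2(x_1^2+x_{d+1}^2)\le c^2(x_1^2+x_{d+1}^2)$. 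Putting these together yields $\|M_dx\|_2^2\le c^2(x_1^2+x_{d+1}^2)+c^2\sum_{i=2}^{d}x_i^2=c^2\|x\|_2^2$, hence $\|M_d\|_2\le c$, closing the induction.

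The main thing to watch is that the claim is strictly stronger than submultiplicativity delivers: each $\|A_k\|_2=\max\{\|B_k\|_2,1\}=1$, so $\|AB\|_2\le\|A\|_2\|B\|_2$ only gives $\|M_d\|_2\le1$, and the gain down to $c$ hinges on separating how $M_{d-1}$ acts on the shared coordinate from how it acts on the remaining coordinates (which the block-diagonal peeling makes transparent) together with the contraction $c\le1$ in the final line. The identical argument applies to any ordering of the product: one simply peels off whichever factor sits at the right-hand end, since that factor always touches a coordinate that none of the remaining factors touch.
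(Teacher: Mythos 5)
Your proof is correct and follows essentially the same route as the paper's: induction on the dimension, exploiting the block-diagonal action of the remaining $d-1$ factors (which touch only coordinates $1,\dots,d$) and using $c\le 1$ exactly once to absorb the shared coordinate. The only cosmetic difference is that you peel $A_d$ off the right end of the product (applied first) whereas the paper peels it off the left (applied last); as you observe, either ordering works.
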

\begin{proof}[Proof of Lemma~\ref{lemma:bound}]
  For $i, j \in \seq{1}{(d+1)}$, denote $\mu_{(i, j)} \in \mathbb{R}^2$ as the sub-vector of $\mu \in \mathbb{R}^{d+1}$ that contains only the $i$ and $j$th elements, and write $\mu_{-(i, j)} \in \mathbb{R}^{d-1}$ to be the sub-vector of $\mu$ after removing these elements.
  By design, the matrix $A_k$ operates only on the sub-vector $\mu_{(1, k+1)}$.
That is, if $B_k \, \mu_{(1, k+1)} = (\tilde{\mu}_{(1)}, \tilde{\mu}_{(k+1)})^T$, then $A_k\mu = (\tilde{\mu}_{(1)}, \mu_{(2)}, \ldots, \tilde{\mu}_{(k + 1)}, \ldots, \mu_{(d+1)})$.
  From this, we see that for positive integer $m \leq d$,
  the first $m$ factors of the product $\prod_{k = 1}^d A_k$ modify only the first $m + 1$ dimensions of a vector $\mu \in \R^{d+1}$.
  
We proceed by mathematical induction on the dimension size.
Let $\big\{A^{(d)}_k, k\in \seq{1}{d-1}\big\}$ be a collection of matrices satisfying the condition of Lemma~\ref{lemma:bound} for each $d$.
Setting $P_d=\big\|\prod_{k = 1}^{d-1} A^{(d)}_k \big\|_2$, we first observe that Eq.~(\ref{eq:lemma:bound}) holds for $d=2$ as a direct consequence of the condition $\|B_k\|_2 \leq c$.
Suppose inductively that the lemma holds for $d$, so that $P_d\leq\lemmaBound$.
We wish to bound $P_{d+1}$.
We can choose $A^{(d)}_k$ to be the $(d+1)\times (d+1)$ sub-matrix of $A^{(d+1)}_k$ omitting row and column $(d+2)$.
Thus, for $k\in\seq{1}{d}$,
\begin{equation*}
A^{(d+1)}_k =
  \begin{pmatrix}
    A^{(d)}_k & 0 \\
    0 & 1
  \end{pmatrix}.
\end{equation*}
Consider a vector $\mu \in \R^{d+1}$, such that $\|\mu\|_2 \leq 1$.
Let $\tilde{\mu} \in \R^{d}$ be defined by 
\begin{equation}
\tilde{\mu} = \bigg[\bigg(\prod_{k = 1}^{d-1} A^{(d+1)}_k\bigg)\mu\bigg]_{(1:d)}
\end{equation}
and notice that we have
\begin{equation}\label{eq:lemma:d}
\tilde{\mu} = \bigg(\prod_{k = 1}^{d-1} A^{(d)}_k\bigg)\mu_{(1:d)}.
\end{equation}
By construction, we have 
\begin{align*}
    \bigg|\Big(\prod_{k = 1}^d A^{(d+1)}_k\Big)\mu \bigg|^2_2 &= \bigg|\Big(A^{(d+1)}_d\prod_{k = 1}^{d-1} A^{(d+1)}_k\Big)\mu \bigg|^2_2\\
    &= \big|A^{(d+1)}_d(\tilde{\mu}_{(1:d)}, \mu_{(d+1)})^T\big|^2_2 \\
    &= \big|B^{(d+1)}_d (\tilde{\mu}_1, \mu_{(d+1)})^T\big|^2_2 + \big|\tilde{\mu}_{(1:d)}\big|^2_2 - \tilde{\mu}^2_1.
\end{align*}
Because $\|B^{(d+1)}_d\|_2 \leq \lemmaBound$, $|B^{(d+1)}_d (\tilde{\mu}_1, \mu_{(d+1)})^T|^2_2 \leq \lemmaBound^2|(\tilde{\mu}_1, \mu_{(d+1)})^T|^2_2$.
Furthermore, our inductive hypothesis applied to Eq.~(\ref{eq:lemma:d}) implies that
$|{\tilde{\mu}}_{(1:d)}|^2_2 \leq \lemmaBound^2|\mu_{(1:d)}|^2_2 \leq \lemmaBound^2|\mu|^2_2$.
Therefore,
\begin{align}
    \bigg|\Big(\prod_{k = 1}^d A_k\Big)\mu\bigg|^2_2 &\leq \lemmaBound^2({\tilde{\mu}}^2_1 + \mu_{(d+1)}^2) + \lemmaBound^2(|\mu_{(1:d)}|^2_2) - {\tilde{\mu}}^2_1 \\
    &= (\lemmaBound^2-1){\tilde{\mu}}^2_1 + \lemmaBound^2 |\mu|^2_2
    \\
    \label{eq:lemma:conclusion}
    &\leq \lemmaBound^2 \vert\mu\vert^2_2,
\end{align}
with Eq.~(\ref{eq:lemma:conclusion}) implied by $\lemmaBound \leq 1$, and hence $(\lemmaBound^2-1) \leq 0$.
It follows immediately from Eq.~(\ref{eq:lemma:conclusion}) that $P_{d+1}\le c$, completing the proof.
\end{proof}

We now return to the main argument.

\begin{proof}[Proof of Theorem~\ref{theorem:GG}]
Using the transformation invariance of the MLE, we can suppose without loss of generality that the maximum of the marginal likelihood for unit $u$ is at $\phi, \psi_u = 0$.
To help ease notation, we write the covariance matrix as
$$
\Lambda^*_u = \begin{pmatrix}
    \Lambda^{(u)}_{\phi} & \Lambda_{\phi, u} \\
    \Lambda_{\phi, u} & \Lambda_{u}
\end{pmatrix}.
$$
Let the prior density $\pi_0(\theta)$ correspond to a Gaussian distribution with mean $\mu_0 \in R^{U+1}$ and precision $\Gamma_0 = \Sigma^{-1}_0 \in \R^{U+1\times U+1}$,
\begin{align*}
    \mu_0 = \begin{pmatrix}
        \mu_0^{(\phi)} \\ 
        \mu_0^{(1)} \\ 
        \vdots \\
        \mu_0^{(U)}
    \end{pmatrix}, \, \, \hspace{2mm} \Gamma_0 = \begin{pmatrix}
        \tau^{(\phi)}_0 & 0 & \ldots & 0 \\
        0 & \tau^{(1)}_0 & & \vdots \\ 
        \vdots & & \ddots & 0\\
        0 & \ldots & 0 & \tau^{(U)}_0
    \end{pmatrix}.
\end{align*}

Eqs.~\ref{eq:margBayes} and \ref{eq:MPIFupdate} contain two indices $(\nmif, \unit)$ for the intermediate density function $\pi_{\nmif, \unit}(\theta)$.
The first index ($\nmif \in \mathbb{N}$) counts the number of times data from all units has been used in the Bayes update (Eq.~\ref{eq:margBayes}), and this corresponds to the number of complete iterations of the $\nmif$ loop in Algorithm~\ref{alg:mpif}. 
The second index ($\unit \in \seq{1}{U}$) denotes the data from which unit is currently being used to update the density, and we refer to this as a sub-iteration. 
In what follows, we assume that $\pi_{\nmif, \unit - 1}(\theta) = \pi_{\nmif - 1, \Unit}(\theta)$ if $\unit = 1$, and use a similar convention for the corresponding mean and covariance that correspond to this intermediate density. 

The density after each sub-iteration of Eqs.~\ref{eq:margBayes} and \ref{eq:MPIFupdate} is Gaussian, and the marginalization step ensures that the precision matrix from previous sub-iterations is diagonal. 
Using $\mu_{\nmif, \unit-1}$ and $\Gamma_{\nmif, \unit-1}$ to denote the mean and precision after $\nmif$ complete iterations and the $(\unit-1)$th unit-iteration, we write: 
\begin{align*}
    \mu_{\nmif, \unit - 1} = \begin{pmatrix}
        \mu_{\nmif, \unit-1}^{(\phi)} \\ 
        \mu_{\nmif, \unit-1}^{(1)} \\ 
        \vdots \\
        \mu_{\nmif, \unit-1}^{(\Unit)}
    \end{pmatrix}, \, \, \hspace{2mm} \Gamma_{\nmif, \unit-1} = \begin{pmatrix}
        \tau^{(\phi)}_{\nmif, \unit-1} & 0 & \ldots & 0 \\
        0 & \tau^{(1)}_{\nmif, \unit-1} & & \vdots \\ 
        \vdots & & \ddots & 0\\
        0 & \ldots & 0 & \tau^{(\Unit)}_{\nmif, \unit-1}
    \end{pmatrix}.
\end{align*}
By design, each sub-iteration $u$ only modifies the elements of the mean and precision that correspond to the shared parameter $\phi$ and the unit specific parameter $\psi_u$. Thus, we use the superscript notation $\mu_{\nmif, \unit}^{(1, k)}$ and $\Gamma_{\nmif, \unit}^{(1, k)}$ to denote the components of the vector $\mu_{\nmif, \unit}$ corresponding to parameters $\phi$ and $\psi_k$, and the $2\times 2$ submatrix of $\Gamma_{\nmif, \unit}$ with the elements corresponding to parameters $\phi$ and $\psi_k$. 

Performing the Bayes update to this distribution (Eq.~\ref{eq:margBayes}) results in an unmarginalized precision matrix, which we denote as $\tilde{\Gamma}_{\nmif, \unit}$, where the only modified components are
$$
\tilde{\Gamma}^{(\nmif, \unit)}_{\nmif, \unit} = \Gamma^{(\nmif, \unit)}_{\nmif, \unit-1} + \Lambda_\unit^* = \begin{pmatrix}
  \tau^{(\phi)}_{\nmif, \unit-1} + \Lambda^{(\unit)}_{\phi} & \Lambda_{\phi, \unit} \\ 
  \Lambda_{\phi, \unit} & \tau^{(\unit)}_{\nmif, \unit-1} + \Lambda_{\unit}
\end{pmatrix}. 
$$
The corresponding mean $\tilde{\mu}_{(\nmif, \unit)} = \mu_{(\nmif, \unit)}$, noting that the marginalization procedure (Eq.~\ref{eq:MPIFupdate}) does not affect the mean, remains the same as the previous iteration except for the components
\begin{align}
\begin{split}
    \tilde{\mu}^{(\phi, \unit)}_{\nmif, \unit} &= \big(\tilde{\Gamma}^{(\nmif, \unit)}_{\nmif, \unit}\big)^{-1}\big(\Gamma^{(\nmif, \unit)}_{\nmif, \unit-1} \mu_{\nmif, \unit-1}^{(\phi, u)} + \Lambda_u^* (0, 0)^T\big) \\
   \mu^{(\phi, u)}_{\nmif, \unit} &= \big(\tilde{\Gamma}^{(\phi, u)}_{\nmif, \unit}\big)^{-1}\Gamma^{(\phi, u)}_{\nmif, \unit-1} \mu_{\nmif, \unit-1}^{(\phi, u)} \\
   &= B_{\nmif, \unit} \mu_{\nmif, \unit-1}^{(\phi, u)}, 
\end{split}\label{eq:submatB}
\end{align}
where $B_{\nmif, \unit} = \big(\tilde{\Gamma}^{(\phi, 1)}_{\nmif, \unit}\big)^{-1}\Gamma^{(\phi, u)}_{\nmif, \unit-1}$ is a $2\times 2$ matrix that provides the update to the first and $(\unit+1)$th components of a vector $\mu \in \mathbb{R}^{\Unit+1}$.
Now by defining matrix $A_{\nmif, \unit}$ to be a perturbation of the $\Unit+1$ identity matrix, such that
\begin{equation*}
\spacingset{1.5}
\big[A_{\nmif, \unit}\big]_{i, j} = \begin{cases}
  [B_{\nmif, \unit}]_{1, 1} & i = j = 1 \\
  [B_{\nmif, \unit}]_{2, 2} & i = j = u+1 \\
  [B_{\nmif, \unit}]_{1, 2} & i = 1, j = k+1 \\
  [B_{\nmif, \unit}]_{2, 1} & i = k+1, j = 1 \\
  1 & i = j, i \notin \{1, k+1\} \\
  0 & \text{otherwise}
\end{cases},
\end{equation*}
We see that for a given precision matrix $\Gamma_{\nmif, \unit-1}$, the a single sub-iteration of Eqs.~\ref{eq:margBayes} and \ref{eq:MPIFupdate} corresponds to a linear update of the mean vector $\mu_{\nmif, \unit-1}$, defined by
\begin{align*}
  \mu_{\nmif, \unit} = A_{\nmif, \unit} \, \mu_{\nmif, \unit-1}.
\end{align*}
Thus, the mean after $M$ complete iterations, denoted by $\mu_{M}$ is calculated by the product of these linear transformations
\begin{align*}
  \mu_M = \bigg(\prod_{\nmif = 1}^M\prod_{u = 1}^U A_{\nmif, \unit}\bigg)\mu_0. 
\end{align*}
Thus, the long-term value of $\mu_M$ depends on the long-term behavior of the product of matrices $A_{\nmif, \unit}$. 

We now consider the limiting behavior of the precision matrices, which determines the behavior of the matrices $A_{\nmif, \unit}$. 
First, we recall that the unmarginalized precision matrix is of the form:
\begin{align*}
  {\tilde{\Gamma}}^{(\nmif, \unit)}_{\nmif, \unit} = \begin{pmatrix}
  \tau^{(\phi)}_{\nmif, \unit-1} + \Lambda^{(u)}_{\phi} & \Lambda_{\phi, u} \\ 
  \Lambda_{\phi, u} & \tau^{(u)}_{\nmif, \unit-1} + \Lambda_{u}\end{pmatrix}. 
\end{align*}
The marginalization step (Eq.~\ref{eq:MPIFupdate}) for Gaussian densities is represented by setting the off-diagonal elements of the covariance matrix to be zero.
Thus, if we generalize this step by writing
\begingroup
\renewcommand{\arraystretch}{0.7}
$
{\tilde{\Gamma}}^{(\nmif, \unit)}_{\nmif, \unit} = \begin{pmatrix}a & b \\ b & d \end{pmatrix},
$
\endgroup then
\begingroup
\renewcommand{\arraystretch}{0.7}
$
\big({\tilde{\Gamma}}^{(\nmif, \unit)}_{\nmif, \unit}\big)^{-1} = \frac{1}{ad - b^2}\begin{pmatrix}d & -b \\ -b & a\end{pmatrix}
$
\endgroup
and the marginalized version of the matrix is
\begingroup
\renewcommand{\arraystretch}{0.7}
$
\big(\Gamma^{(\nmif, \unit)}_{\nmif, \unit}\big)^{-1} = 
    \begin{pmatrix} \frac{d}{ad-b^2} & 0 \\ 0 & \frac{a}{ad-b^2}\end{pmatrix};
$
\endgroup
by taking the inverse of this matrix, we get the precision matrix after a unit-update to be
\begingroup
\renewcommand{\arraystretch}{0.7}
$
   \Gamma^{(\nmif, \unit)}_{\nmif, \unit} = \begin{pmatrix} a - \frac{b^2}{d} & 0 \\ 0 & d-\frac{b^2}{a}\end{pmatrix}.
$
\endgroup

Using this general calculation, the precision matrix after the marginalization step is 
\begin{align}
 \Gamma^{(\phi, u)}_{\nmif, \unit} &= \begin{pmatrix}
    \tau_{\nmif, \unit-1}^{(\phi)} + \Lambda_{\phi}^{(u)} - \frac{\Lambda_{\phi, u}^2}{\tau_{\nmif, \unit-1}^{(u)} + \Lambda_u} & 0 \\
    0 & \tau_{\nmif, \unit-1}^{(u)} + \Lambda_u - \frac{\Lambda_{\phi, u}^2}{\tau_{\nmif, \unit-1}^{(\phi)} + \Lambda^{(u)}_{\phi}}
   \end{pmatrix} \nonumber \\
   &= \begin{pmatrix}
    \tau_{\nmif, \unit-1}^{(\phi)} + \Lambda_{\phi}^{(u)} - \alpha_{\nmif, \unit} & 0 \\
    0 & \tau_{\nmif, \unit-1}^{(u)} + \Lambda_u - \beta_{\nmif, \unit}
   \end{pmatrix},\label{eq:margPrecision}
\end{align}
with $\alpha_{\nmif, \unit} = \frac{\Lambda_{\phi, u}^2}{\tau_{\nmif, \unit-1}^{(u)} + \Lambda_u}$ and $\beta_{\nmif, \unit} = \frac{\Lambda_{\phi, u}^2}{\tau_{\nmif, \unit-1}^{(\phi)} + \Lambda^{(u)}_{\phi}}$. Because $\Lambda^*_u$ is a positive definite matrix, for all real numbers $c > 0$, 
\begin{align*}
    \Lambda_{\phi}^{(u)} > \frac{\Lambda^2_{\phi, u}}{\Lambda_{u}} > \frac{\Lambda^2_{\phi, u}}{c + \Lambda_{u}} 
  \end{align*}
And therefore by letting $\alpha = \min_u \big(\Lambda_u - \frac{\Lambda^2_{\phi, u}}{\Lambda^{(u)}_{\phi}}\big) > 0$, we have $\tau_{\nmif, \unit}^{(\phi)} > \tau_{\nmif, \unit-1} + \alpha$ for all $\nmif \in \mathbb{N}$ and $\unit \in \seq{1}{\Unit}$. 
By iterating this inequality, we have $\tau_{\nmif, \unit}^{(\phi)} > \tau^{(\phi)}_0 + \nmif\alpha$, and thus we see that $\tau_{\nmif, \unit}^{(\phi)} = O(\nmif)$. 
Therefore as $\nmif \rightarrow \infty$, $\tau_{\nmif, \unit}^{(\phi)} \rightarrow \infty$. 
A similar calculation shows that $\tau_{\nmif, k}^{(\unit)} = O(\nmif)$ for all $k, \unit \in \seq{1}{\Unit}$. 
Thus, the covariance matrix after $\nmif$ complete iterations has zeros on off-diagonal elements, and the diagonal elements are of order $O(1/\nmif)$, proving the statement in Theorem~\ref{theorem:GG} that $\|\Sigma_\nmif\|_2 \rightarrow 0$. 
To finish the proof, we need to show that $|\mu_\nmif|_2 \rightarrow 0$. 
To do this, we need more precise descriptions for the rates at which the precision grows. 

Our previous calculations establish that both $\tau^{(\phi)}_{\nmif, k}$ and $\tau^{(\unit)}_{\nmif, k}$ are strictly increasing sequences in both $\nmif$ and $k$, and are unbounded. 
Thus, the correction terms $\alpha_{\nmif, \unit}$ and $\beta_{\nmif, \unit}$ converge to zero, as the only moving parts are the precision terms that are in the denominator of each of these sequences. 
Furthermore, the sequence $\{\alpha_i\}$, defined as
\begin{align*}
  \alpha_i = \sum_{u = 1}^U \alpha_{i, u} 
\end{align*}
satisfies $\alpha_i \rightarrow 0$. 
Thus, by iterating Eq.~\ref{eq:margPrecision}, we can express the precision matrix after $\nmif$ complete iterations as
\begin{align*}
  \Gamma^{(\phi, u)}_{\nmif} = \Gamma^{(\phi, \unit)}_{\nmif-1, \Unit} = \begin{pmatrix}
    \tau^{(\phi)}_{0} + \nmif\sum_{k = 1}^{\Unit}\Lambda^{(k)}_{\phi} - \sum_{i = 1}^\nmif\alpha_{i} & 0 \\
    0 & \tau^{(\unit)}_{0} + \nmif\Lambda_{k} - \sum_{i = 1}^\nmif \beta_{i, \unit}
  \end{pmatrix}.
\end{align*}
Using the fact that the Ces\`aro mean of a convergent sequence converges to the same limit as the sequence, we have $\frac{1}{\nmif}\sum_{i = 1}^\nmif \alpha_{i} \rightarrow 0$, and therefore 
\begin{align}
  \Gamma^{(\phi, k)}_{\nmif, \unit} &= \begin{pmatrix} 
    \nmif\sum_{k = 1}^\Unit \Lambda_{\phi}^{(k)} + o(\nmif) & 0 \\
    0 & \nmif\Lambda_k + o(\nmif)
  \end{pmatrix}. \label{eq:nGamma}
\end{align}

Using the rates established in Eq.~\ref{eq:nGamma}, we now investigate the long-term behavior of the mean vector $\mu_\nmif$ by considering the spectral norm of the matrices $B_{\nmif, \unit}$ that define the linear updates to the sub-vector $\mu^{(\phi, \unit)}_{\nmif}$.
Recall from Eq.~\ref{eq:submatB} that $B_{\nmif, \unit} = \big(\Gamma^{(\phi, u)}_{\nmif, \unit-1} + \Lambda^*_{u}\big)^{-1}\Gamma^{(\phi, u)}_{\nmif, \unit-1}$. 
We can take advantage of the fact that $\Gamma^{(\phi, u)}_{\nmif, \unit-1}$ is a diagonal matrix to write
\begin{align*}
  B_{\nmif, \unit} &= \big(\Gamma^{(\phi, u)}_{\nmif, \unit-1} + \Lambda^*_{u}\big)^{-1}\Gamma^{(\phi, u)}_{\nmif, \unit-1} \\
  &= \Big(I + (\Gamma^{(\phi, u)}_{\nmif, \unit-1})^{-1}\Lambda^*_{u}\Big)^{-1} \\
  &= \begin{pmatrix} 1 + \frac{\Lambda^{(u)}_{\phi}}{\nmif\sum_{k = 1}^U\Lambda^{(k)}_{\phi} + o(\nmif)} & \frac{\Lambda_{\phi, u}}{\nmif\sum_{k = 1}^U\Lambda^{(k)}_{\phi} + o(\nmif)} \\ 
  \frac{\Lambda_{\phi, u}}{\nmif\Lambda_u + o(\nmif)} & 1 + \frac{\Lambda_{u}}{\nmif\Lambda_u + o(\nmif)} \end{pmatrix}^{-1} \\
  &= \begin{pmatrix} 1 + \frac{\Lambda^{(u)}_{\phi}}{\nmif\sum_{k = 1}^U\Lambda^{(k)}_{\phi}} + o(1/\nmif) & \frac{\Lambda_{\phi, u}}{\nmif\sum_{k = 1}^U\Lambda^{(k)}_{\phi}} + o(1/\nmif)\\ 
  \frac{\Lambda_{\phi, u}}{\nmif\Lambda_u} + o(1/\nmif) & 1 + \frac{1}{\nmif} + o(1/\nmif) \end{pmatrix}^{-1}.
\end{align*}
Next we would like to calculate the spectral norm of $B_{\nmif, \unit}$. 
Recall that for an invertible matrix $A$, $\|A^{-1}\|_2 = \sigma_{\max}(A^{-1}) = 1 / \sigma_{\min}(A)$, where $\sigma_{\max}$ and $\sigma_{\min}$ correspond to the maximum and minimum singular values of $A$.
Therefore in order to calculate $\|B_{\nmif, \unit}\|_2$, we need to find the minimum singular value of $B^{-1}_{\nmif, \unit}$, which is equal to the minimum eigenvalue of the matrix $B^{-T}_{\nmif, \unit}B^{-1}_{\nmif, \unit}$:
\begin{align*}
  B^{-T}_{\nmif, \unit}B^{-1}_{\nmif, \unit} &= \begin{pmatrix} 1 + \frac{2\Lambda^{(u)}_{\phi}}{\nmif\sum_{k = 1}^U\Lambda^{(k)}_{\phi}} + o(1/\nmif) & \frac{\Lambda_{\phi, u}}{\nmif} \big(\frac{1}{\sum_{k = 1}^U\Lambda^{(k)}_{\phi}} + \frac{1}{\Lambda_u}\big) + o(1/\nmif)\\ 
  \frac{\Lambda_{\phi, u}}{\nmif} \big(\frac{1}{\sum_{k = 1}^U\Lambda^{(k)}_{\phi}} + \frac{1}{\Lambda_u}\big) + o(1/\nmif) & 1 + \frac{2}{\nmif} + o(1/\nmif) \end{pmatrix} \\
  &= I + \frac{1}{\nmif}\begin{pmatrix} \frac{2\Lambda^{(u)}_{\phi}}{\sum_{k = 1}^U\Lambda^{(k)}_{\phi}} + o(1) & \Lambda_{\phi, u} \big(\frac{1}{\sum_{k = 1}^U\Lambda^{(k)}_{\phi}} + \frac{1}{\Lambda_u}\big) + o(1)\\ 
  \Lambda_{\phi, u} \big(\frac{1}{\sum_{k = 1}^U\Lambda^{(k)}_{\phi}} + \frac{1}{\Lambda_u}\big) + o(1) & 2 + o(1) \end{pmatrix} \\
  &= I + \frac{1}{\nmif}C_{\nmif, \unit}.
\end{align*}
Using this expression, the eigenvalues of $B^{-T}_{\nmif, \unit}B^{-1}_{\nmif, \unit}$ are equal to one plus the eigenvalues of $\frac{1}{\nmif}C_{\nmif, \unit}$. 
Thus, we consider the characteristic polynomial defined by $\det (C_{\nmif, \unit} - \lambda I)$ to calculate the eigenvalues of $C_{\nmif, \unit}$. The resulting polynomial is
\begin{align*}
f_{\nmif, \unit}(\lambda) &=
  \bigg(
    \frac{2\Lambda^{(u)}_{\phi}}{\sum_{k = 1}^U\Lambda^{(k)}_{\phi}} - \lambda\bigg)
  \big(2 - \lambda\big) -
  \Lambda_{\phi, u}^2
  \bigg(
    \frac{1}{\sum_{k = 1}^U\Lambda^{(k)}_{\phi}} + \frac{1}{\Lambda_u}
  \bigg)^2 + o(1)
  \\
  &= \lambda^2 -
  2\bigg(
    1 + \frac{\Lambda^{(u)}_{\phi}}{\sum_{k = 1}^U\Lambda^{(k)}_{\phi}}
  \bigg)
  \lambda +
  \bigg\{
    \frac{4\Lambda^{(u)}_{\phi}}{\sum_{k = 1}^U\Lambda^{(k)}_{\phi}}
    - \Lambda_{\phi, u}^2
    \bigg(
      \frac{1}{\sum_{k = 1}^U\Lambda^{(k)}_{\phi}} + \frac{1}{\Lambda_u}
    \bigg)^2
  \bigg\} + o(1).
\end{align*}
We now proceed by showing that, under the condition specified in Assumption~\ref{eq:GausAssumption}, all roots of the polynomial $f_{\nmif, \unit}(\lambda)$ are strictly positive. 
That is, for some $0 < \lambda^{(1)}_{\nmif, \unit} < \lambda^{(2)}_{\nmif, \unit}$, we have the eigenvalues of $C_{\nmif, \unit}$ are $\lambda^{(1)}_{\nmif, \unit} + o(1)$ and $\lambda^{(2)}_{\nmif, \unit} + o(1)$. 

For this to hold, we need first that the discriminant of the resulting quadratic equation to be strictly positive. 
This condition holds without any assumptions, as we can write the condition as
\begin{align*}
\bigg(2\bigg[ 1 + \frac{\Lambda^{(u)}_{\phi}}{\sum_{k = 1}^U\Lambda^{(k)}_{\phi}}\bigg]\bigg)^2 - &4\bigg(\frac{4\Lambda^{(u)}_{\phi}}{\sum_{k = 1}^U\Lambda^{(k)}_{\phi}} - \Lambda_{\phi, u}^2\bigg[\frac{1}{\sum_{k = 1}^U\Lambda^{(k)}_{\phi}} + \frac{1}{\Lambda_u}\bigg]^2\bigg) > 0 \iff \\
\Lambda^{2}_{\phi, u}\bigg(&\frac{1}{\sum_{k = 1}^U \Lambda_{\phi}^{(k)}} + \frac{1}{\Lambda_u}\bigg)^2 > -\frac{\big(\sum_{k = 1}^U \Lambda_{\phi}^{(k)} - \Lambda_\phi^{(u)}\big)^2}{\big(\sum_{k = 1}^U \Lambda_{\phi}^{(k)}\big)}.
\end{align*}
This condition always holds as the left-hand side of the inequality is strictly positive, and the right-hand side of the inequality is strictly negative. 
Thus, the polynomial $f_{\nmif, \unit}(\lambda)$ has two unique real roots. 

Next, by taking the derivative and setting equal to zero, we note that $\argmin_\lambda f_{\nmif, \unit}(\lambda) = \lambda^* > 0$, as the coefficient on the linear term of the polynomial is always negative.
This combined with the previous condition ensures that $\lambda^{(2)}_{\nmif, \unit} > 0$. 

Finally, if $f_{\nmif, \unit}(0) > 0$, then the intermediate value theorem implies that because $f_{\nmif, \unit}(\lambda^*) < 0$, then there exists some $\lambda^{(1)}_{\nmif, \unit} \in (0, \lambda^{*})$ such that $f_{\nmif, \unit}(\lambda^{(1)}_{\nmif, \unit}) = 0$, implying that $\lambda^{(1)}_{\nmif, \unit}$ is a positive root.
Therefore we need 
$$
\frac{4\Lambda^{(u)}_{\phi}}{\sum_{k = 1}^U\Lambda^{(k)}_{\phi}} - \Lambda_{\phi, u}^2\bigg(\frac{1}{\sum_{k = 1}^U\Lambda^{(k)}_{\phi}} + \frac{1}{\Lambda_u}\bigg)^2 > 0,
$$
which is equivalent to the condition
\begin{align*}
  \Lambda_{\phi, u}^2 < \frac{4\Lambda^{(u)}_\phi\Lambda^2_{u}\sum_{k = 1}^U \Lambda_{\phi}^{(k)}}{\big(\Lambda_u + \sum_{k = 1}^{U}\Lambda^{(k)}_{\phi}\big)^2}.
\end{align*}
This condition is guaranteed by Assumption~\ref{eq:GausAssumption}, ensuring that the eigenvalues of $C_{\nmif, \unit}$ are $\lambda^{(1)}_{\nmif, \unit} + o(1)$ and $\lambda^{(1)}_{\nmif, \unit} + o(1)$ for two positive numbers $\lambda^{(1)}_{\nmif, \unit}$ and $\lambda^{(1)}_{\nmif, \unit}$.
As a result of this computation, The minimum eigenvalue of $B^{-T}_{\nmif, \unit}B^{-1}_{\nmif, \unit}$ is equal to $1 + \frac{\lambda^{(1)}_{\nmif, \unit}}{\nmif} + o(1/\nmif)$, implying that 
\begin{align*}
  \big\| B_{\nmif, \unit}\big\|_2 &= 1 / \sigma_{min}\big(B^{-1} \big) \\
  &= \frac{1}{1 + 1 + \frac{\lambda^{(1)}_{\nmif, \unit}}{\nmif} + o(1/\nmif)} \\
  &= 1 - \frac{\lambda^{(1)}_{\nmif, \unit}}{\nmif} + o(1/\nmif). 
\end{align*}
By Lemma~\ref{lemma:matrix}, we have for all $\nmif \in \mathbb{N}$, $\|\prod_{u = 1}^U A_{\nmif, \unit}\|_2 \leq 1 - \frac{\max_u \lambda^{(1)}_{\nmif, \unit}}{\nmif} + o(1/\nmif)$, and therefore by the sub-multiplicative property of the spectral norm,
\begin{align*}
  \Big\| \prod_{i = 1}^m \prod_{u = 1}^U A_{i, u}\Big\|_2 & \leq \prod_{i = 1}^m \Big\|\prod_{u = 1}^U A_{i, u}\Big\|_2 \\
  &< \prod_{i = 1}^m \Big(1 - \frac{\max_u \lambda^{(1)}_{i, \unit}}{i} + o(1/i)\Big) \rightarrow 0.
\end{align*}
Therefore for any arbitrary initial conditions $\mu_0$ and $\Gamma_0$, we have
$$|\mu_m|_2 = \bigg|\bigg(\prod_{i = 1}^m\prod_{u = 1}^U A_{i, \unit}\bigg)\mu_0\bigg|_2 \rightarrow 0$$
as $m\rightarrow \infty$, completing the proof.
\end{proof}

  \subsection{Proof of Corollary~\ref{corollary:perturbed}}\label{appendix:perturbed} 
 
  For this corollary, we use the same setup as Appendix~\ref{appendix:Gaus}. 
  At each step, we add independent random perturbations to the current parameter distribution via a convolution operation before updating using Bayes rule. 
  Because Gaussian density convolved with another Gaussian density is still Gaussian, we again only record the mean and precision corresponding to the resulting Gaussian distribution at each step, which we denote $\mu'_{\nmif} \in \R^{U+1}$ and $\Gamma'_{\nmif} \in \R^{U+1\times U+1}$. 

\begin{proof}
Let $\mu_{\nmif, \unit} \in \R^{U+1}$ and
$\Gamma_{\nmif, \unit} = \text{diag}
  \big(
    \tau_{\nmif, \unit}^{(\phi)}, \tau_{(\nmif, \unit)}^{(1)}, \ldots, \tau_{(\nmif, \unit)}^{(U)}
  \big)
  \in \R^{U+1\times U+1}$
denote the mean and precision matrices of the multivariate Gaussian after the $(\nmif, \unit)$th iteration. 
Consider the $(\nmif, \unit)$th update of Eqs.~\ref{eq:margBayesPerturb}--\ref{eq:MPIFupdatePerturb}. 
Using well-known results on the convolution of Gaussian densities and the conjugate prior identities, the umarginalized precision is evaluated as
\begin{align*}
\tilde{\Gamma}'_{\nmif, \unit+1} &= \begin{pmatrix} \frac{\tau_{\nmif, \unit}^{(\phi)}\tau_{\nmif, \unit}^{(\sigma)}}{\tau_{\nmif, \unit}^{(\phi)} + \tau_{\nmif, \unit}^{(\sigma)}} + \Lambda^{(u)}_\phi & \Lambda_{\phi, u} \\ \Lambda_{\phi, u} & \frac{\tau_{\nmif, \unit}^{(u)}\tau_{\nmif, \unit}^{(\sigma)}}{\tau_{\nmif, \unit}^{(u)} + \tau_{\nmif, \unit}^{(\sigma)}} + \Lambda_u\end{pmatrix}.
   \end{align*}
   Following the formula from Appendix~\ref{appendix:Gaus} for finding the marginalized precision matrices, the marginalized precision $\Gamma'_{\nmif, \unit}$ can be expressed as
     \begin{align*}
   \Gamma'_{\nmif, \unit+1} &= \begin{pmatrix} \frac{\tau^{(\phi)}_\nmif\tau^{(\sigma)}_\nmif}{\tau^{(\phi)}_\nmif + \tau^{(\sigma)}_\nmif} + \Lambda^{(\unit)}_{\phi} - \frac{\Lambda^2_{\phi, u}(\tau^{(\sigma)}_\nmif + \tau^{(u)}_\nmif)}{\tau^{(\sigma)}_\nmif\tau^{(\unit)}_\nmif + \Lambda_u(\tau^{(\sigma)}_\nmif + \tau^{(u)}_\nmif)} & 0 \\ 0 & \frac{\tau^{(u)}_\nmif\tau^{(\sigma)}_\nmif}{\tau^{(u)}_\nmif + \tau^{(\sigma)}_\nmif} + \Lambda_{u} - \frac{\Lambda^2_{\phi, u}(\tau^{(\sigma)}_\nmif + \tau^{(\phi)}_\nmif)}{\tau^{(\sigma)}_\nmif\tau^{(\phi)}_\nmif + \Lambda^{(u)}_\phi(\tau^{(\sigma)}_\nmif + \tau^{(\phi)}_\nmif)}\end{pmatrix}.
   \end{align*}
  Because $\Lambda^*_u$ is positive definite, the same argument in the unperturbed case (Appendix~\ref{appendix:Gaus}) gives the existence of positive constants $\{\beta_u, \alpha_u\}_{u = 1}^U$ such that
  \begin{align*}
    \tau^{(\phi)}_{\nmif, \unit+1} &> \frac{\tau^{(\phi)}_\nmif\tau^{(\sigma)}_\nmif}{\tau^{(\phi)}_\nmif + \tau^{(\sigma)}_\nmif} + \alpha_u \\
    \tau^{(u)}_{\nmif, \unit+1} &> \frac{\tau^{(u)}_\nmif\tau^{(\sigma)}_\nmif}{\tau^{(u)}_\nmif + \tau^{(\sigma)}_\nmif} + \beta_u.
  \end{align*}
  Furthermore, the sequence defined by iterating the right hand side of these inequalities is unbounded, which is easily demonstrated by assuming it is bounded, and noting that this leads to a contradiction because $1/\tau^{(\sigma)}_\nmif = o(1/\nmif)$, and therefore each update grows by an amount arbitrarily close to the constants $\alpha_u$ and $\beta_u$.
  
  Immediately, this result implies that $\tau_{\nmif, \unit}^{(\phi)}, \tau_{\nmif, \unit}^{(u)} \rightarrow \infty$ as $\nmif \rightarrow \infty$. 
  That is, the covariance matrix convergence to the zero matrix, even in the presence of parameter perturbations. 
  In particular, we can write the correction term
  $$
  \epsilon_{\nmif, \unit} = \frac{\Lambda^2_{\phi, u}(\tau^{(\sigma)}_\nmif + \tau^{(u)}_\nmif)}{\tau^{(\sigma)}_\nmif\tau^{(u)}_\nmif + \Lambda_u(\tau^{(\sigma)}_\nmif + \tau^{(u)}_\nmif)} = o(1),
  $$
  and, using the fact that the sequence $\tau^{(\sigma)}_\nmif$ is defined independently of $\tau^{\phi}_{\nmif, \unit}$, we can write 
  \begin{align*}
  \frac{\tau^{(\phi)}_{\nmif, \unit}\tau^{(\sigma)}_\nmif}{\tau^{(\phi)}_{\nmif, \unit} + \tau^{(\sigma)}_\nmif} &= \frac{\tau^{(\phi)}_{\nmif, \unit}}{\frac{1}{\tau^{(\sigma)}_{\nmif}}\big(\tau^{(\phi)}_{\nmif, \unit} + \tau^{(\sigma)}_{\nmif}\big)} \\
  &= \tau_{\nmif, \unit}^{(\phi)}\frac{1}{1+o(1/\nmif)} \\
  &= \tau^{(\phi)}_{\nmif, \unit} + o(1/\nmif).
  \end{align*}
  Thus, the updated precision after a full iteration can be approximated as
  $$
  \tau_{\nmif + 1}^{(\phi)} = \tau^{\phi}_{\nmif} + \nmif\sum_{u = 1}^U \Lambda_{\phi}^{(u)} - \sum_{u = 1}^U\epsilon_{\nmif, \unit} + o(1/\nmif),
  $$
  and therefore 
  $$
  \tau_{\nmif + 1}^{(\phi)} = o(\nmif) + \nmif\sum_{u = 1}^U \Lambda_{\phi}^{(u)}, 
  $$
  and the same basic calculation shows that for all $u$, 
  $$
  \tau^{(u)}_{\nmif+1} = o(\nmif) + \nmif\Lambda_{u}.
  $$
  
Similar to the unperturbed case, the update to the mean can be expressed as $\mu_{\nmif+1} = \big(\prod_{u = 1}^UA'_{\nmif, \unit}\big)\mu_\nmif$, where the matrix $A'_{\nmif, \unit}$ is defined in the same way as $A_{\nmif, \unit}$ from Appendix~\ref{appendix:Gaus}, after replacing the covariance matrices with the perturbed versions: 
$$
A'_{\nmif, \unit} = \big(\Gamma'_{\nmif, \unit - 1} + \Lambda^*_u\big)^{-1}\Gamma'_{\nmif, \unit - 1}.
$$
We have shown above that $A'_{\nmif, \unit}$ and $A_{\nmif, \unit}$ are asymptotically equivalent. 
Therefore the same argument in the proof of Theorem~\ref{theorem:GG} immediately implies
$$
\big| \mu'_{\nmif} \big|_2 =
  \bigg|
    \bigg(
      \prod_{n = 1}^\nmif\prod_{u = 1}^U A'_{n, u}
    \bigg)
    \mu_0
  \bigg|_2 \rightarrow 0.
$$

   \end{proof}

\end{appendices}

\end{document}